\def\@IEEElegacywarn#1#2{}
\newtheorem{thm}{Theorem}
\newtheorem{lem}[thm]{Lemma}
\newtheorem{prop}[thm]{Proposition}
\newtheorem{defn}[thm]{Definition}
\newtheorem{cor}[thm]{Corollary}
\renewenvironment{proof}{\noindent{\bf Proof.}}{ \hfill~\qed}
\newenvironment{pproof}[1]{\vspace{2mm}\noindent{\bf Proof of #1.}}{ \hfill~\qed}
\def\qed{\rule[0pt]{5pt}{5pt}\par\medskip}
\newcommand{\bmat}[1]{\begin{bmatrix}#1\end{bmatrix}}
\DeclareMathOperator{\adj}{adj}
\DeclareMathOperator{\ch}{char}
\DeclareMathOperator{\delay}{delay}
\newcommand{\ideal}{\lhd}
\newcommand{\isomorphic}{\cong}
\newcommand{\p}{\textup{p}}
\renewcommand{\sp}{\textup{sp}}
\newcommand{\R}{\mathbb{R}} 
\newcommand{\Z}{\mathbb{Z}} 
\newcommand{\C}{\mathbb{C}} 
\newcommand{\Q}{\mathbb{Q}} 
\newcommand{\F}{\mathbb{F}} 
\newcommand{\set}[2]{\ensuremath{\left.\left\{ #1 \,\right\vert\, #2 \right\}}}
\newcommand{\eemph}[1]{\textbf{\textit{#1}}}
\newcommand{\mr}{\R(\mathbf{s})}
\newcommand{\mx}{ \bigl(\R(\mathbf{x})\bigr)(\mathbf{s}) }
\newcommand{\my}{ \R(\mathbf{x},\mathbf{s}) }
\def\note#1{}
\begin{document}
\title{An Algebraic Approach to the Control of Decentralized Systems}

\author{Laurent~Lessard~\and~Sanjay~Lall}
\note{}
\maketitle

\begin{abstract}
Optimal decentralized controller design is notoriously difficult, but
recent research has identified large subclasses of such problems that
may be convexified and thus are amenable to solution via efficient
numerical methods. One recently discovered sufficient condition for
convexity is \emph{quadratic invariance} (QI). Despite the simple
algebraic characterization of QI, which relates the plant and
controller maps, proving convexity of the set of achievable
closed-loop maps requires tools from functional analysis. In this
work, we present a new formulation of quadratic invariance that is
purely algebraic. While our results are similar in flavor to those
from traditional QI theory, they do not follow from that body of
work. Furthermore, they are applicable to new types of systems that
are difficult to treat using functional analysis.  Examples discussed
include rational transfer matrices, systems with delays, and
multidimensional systems.
\end{abstract}

\section{Introduction}

The problem
of designing control systems where multiple controllers are
interconnected over a network to control a collection of
interconnected plants is long-standing and
difficult~\cite{blondel,witsenhausen}. \emph{Quadratic invariance} is
a mathematical condition which, when it holds, allows one to bring to
bear the tools of Youla parameterization to find optimal
controllers~\cite{rotkowitz02,rotkowitz06}. A network system has the
requisite quadratic invariance under a surprisingly wide set of
circumstances. These include cases where the controllers can
communicate more quickly than the plant dynamics propagate through the
network~\cite{rotkowitz10}.

There is a large and diverse body of literature addressing
decentralized control theory and specifically conditions that make a
problem more tractable in some sense. The seminal work of Ho and
Chu~\cite{hochu} develops the \emph{partial nestedness} condition
under which there exists an optimal decentralized controller that is
linear. More recently, Qi, Salapaka, et al. identified many different
decentralized control architectures that may be cast as tractable
optimization problems~\cite{voulgaris_stabilization}. LMI formulations
of distributed control problems are developed
in~\cite{dulleruddandrea,langbort2004}. Stabilization was fully
characterized for all QI problems in~\cite{sabau_2011}. Explicit
state-space solutions were also found for classes of delayed
problems~\cite{lamperski_lessard}, poset-causal
problems~\cite{shah13}, and two-player output-feedback
problems~\cite{lessard_tpof}.

There have been relatively few works treating decentralized control
from a purely algebraic perspective. One recent example is the work of
Shin and Lall~\cite{shin2012decentralized}, where elimination theory
is used to express solutions to decentralized control problems as
projections of semialgebraic sets. Quadratic invariance was first
treated using an analytic framework
in~\cite{rotkowitz02,rotkowitz06}. The aim of this paper is to address
an algebraic treatment of quadratic invariance. The
consequence of this is not only a new proof of existing results in
some cases but also an extension of these results to a significantly
different class of models.  Instead of requiring analytic properties
of our system model, we will require algebraic ones. For
example,~\cite{rotkowitz06} requires that the set of allowable
controllers be a closed inert subspace, whereas in this work we
require that it be a module. The class of systems covered in this
paper includes multidimensional systems, which are not covered in
existing works. This is discussed in Section~\ref{subsec:multid}.

Many topics in control have historically been treated from both
analytic and algebraic viewpoints. As early as 1965, Kalman proposed
the use of modules as the natural framework in which to represent
linear state-space systems~\cite{kalman_1965}.  When systems are
viewed as maps on signal spaces, one has many choices. If one
represents systems as transfer functions, then one can either consider
the generality of transfer functions in a Hardy space and use analytic
methods to prove results, or one can consider formal power series or
rational functions and use algebraic methods. Often, the two
frameworks use very different proof techniques, which provide
different insights and ranges of applicability. This is a fundamental
choice in how we represent the basic objects~\cite{klarner_1969}.
This dichotomy exists in many facets of the control systems
literature. For example, spectral factorization is easily considered
from an algebraic perspective. The Riesz-Fej\'{e}r theorem states that
a trigonometric polynomial which is nonnegative on the circle may be
factored into the product of two polynomials, one of which is
holomorphic inside the disc and the other
outside~\cite{riesz_1990}. This is the fundamental algebraic version
of the discrete-time SISO spectral factorization result. For
comparison, the analytic version of this result is commonly known as
Wiener-Hopf factorization~\cite{wiener}.  Of course, the same choice
of frameworks exists beyond factorization. The theory of stabilization
as introduced by Youla~\cite{youla} was developed both
algebraically~\cite{vidyasagar} and
analytically~\cite{smith_1989}. The idea of algebraic representations
have also proven useful in areas such as realization
theory~\cite{beck_2001}, model reduction~\cite{beckdoyleglover}, and
nonlinear systems theory~\cite{fliess_1983}.

The work in this paper is based on preliminary results that first
appeared in~\cite{lessard2010algebraic,lessard_thesis}. Unlike these
early works, all invariance results in the present work include both
necessary and sufficient conditions, and all the proofs are purely
algebraic.
The paper is organized as follows. The remainder of the introduction gives an overview of quadratic invariance and existing analytic results. Invariance results are proven and discussed for matrices, rings and fields, and rational functions in Sections~\ref{sec:matrix_case},~\ref{sec:rings_and_fields}, and ~\ref{sec:rationals}, respectively. We present some illustrative examples in Section~\ref{sec:examples} and we summarize our contributions in Section~\ref{sec:conclusion}.

\subsection{Quadratic invariance}

We have adopted the notation convention from~\cite{rotkowitz02,rotkowitz06} to make the works readily comparable. Given a plant $G$, which is a map from an input space $\mathcal{U}$ to an output space $\mathcal{Y}$, we seek to design a controller $K:\mathcal{Y}\to\mathcal{U}$ that achieves desirable performance when connected in feedback with $G$. The main object of interest is the function $h:M\to M$ given by
\[
h(K) = -K(I-GK)^{-1}
\]
Here, the domain $M$ is the set of maps $K:\mathcal{Y}\to\mathcal{U}$ such that $I-GK$ is invertible. The image of $h$ is again $M$ because $h$ is an \emph{involution}. That is, $h(h(K)) = K$. We will be more specific shortly about the nature of the spaces $\mathcal{U}$ and $\mathcal{Y}$ (and consequently the maps $G$ and $K$). 

The motivation for studying $h$ is that it is a linear fractional transform that occurs in feedback control. Consider for example the four-block plant of Figure~\ref{fig:4blockplant}.

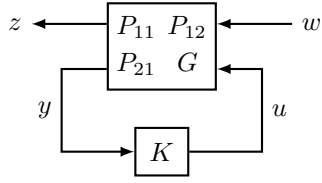
\begin{figure}[ht]
\centering
\begin{tikzpicture}[thick,auto,>=latex,node distance=1.4cm]
\tikzstyle{block}=[draw,rectangle,minimum height=1.8em,minimum width=2em]
\node [block](P){$\begin{matrix}P_{11}\rule[-1.3ex]{-1.3ex}{0pt}& P_{12}\\
P_{21}\rule{-1.3ex}{0pt} & G\end{matrix}$};
\node [block,below of=P](K){$K$};
\draw [<-] (P.east)+(0,-0.3) -- +(0.6,-0.3) |- node[pos=0.25]{$u$} (K);
\draw [<-] (P.east)+(0,0.3) -- +(1,0.3) node [anchor=west]{$w$};
\draw [->] (P.west)+(0,-0.3) -- +(-0.6,-0.3) |- node[swap,pos=0.25]{$y$} (K);
\draw [->] (P.west)+(0,0.3) -- +(-1,0.3) node [anchor=east]{$z$};
\end{tikzpicture}
\caption{Four-block plant with controller in feedback\label{fig:4blockplant}}
\end{figure}

\noindent For simplicity, assume for now that $S \subseteq M$. In Figure~\ref{fig:4blockplant}, the set of achievable closed-loop maps $w\mapsto z$ subject to $K$ belonging to some set $S$ is given by
\[
\mathcal{C} = \set{ P_{11} - P_{12}h(K)P_{21} }{K\in S}
\]
Selecting a controller that optimizes some closed-loop performance metric is equivalent to selecting the best $T\in \mathcal{C}$ and then finding the $K\in S$ that yields this $T$.

Roughly, the works~\cite{rotkowitz02,rotkowitz06} give a necessary and sufficient condition such that $h(S) = S$. If this condition holds, then $\mathcal{C} = \set{ P_{11} - P_{12} K P_{21} }{ K\in S }$, and so the set of achievable closed-loop maps is affine and easily searchable. The condition is called \eemph{quadratic invariance}, and a generic definition is given below.

\begin{defn}[Quadratic invariance]\label{def:qi}
We say that the set $S$ is quadratically invariant (QI) under $G$ if for all $K\in S$, we have $KGK\in S$.
\end{defn}

In~\cite{rotkowitz02}, the input and output spaces are Banach spaces, and $G$ and $K$ are bounded linear operators. In~\cite{rotkowitz06}, the extended spaces $\ell_{2e}$ and $L_{2e}$ are used and the associated maps are then continuous linear operators. In this work we use different spaces still, but the generic definition of quadratic invariance remains the same.

We now state the main results from~\cite{rotkowitz02,rotkowitz06}. Additional notation and terminology is defined after each theorem statement.

\begin{thm}[see~\cite{rotkowitz02}] \label{thm:qi_banach}
Suppose $\mathcal{Y}$ and $\mathcal{U}$ are Banach spaces, $G \in L(\mathcal{U},\mathcal{Y})$ and  $S \subseteq L(\mathcal{Y},\mathcal{U})$ is a closed subspace. Further suppose that $N\cap S = M\cap S$. Then
\[
S\text{ is QI with respect to }G
\quad \iff \quad
h(S\cap M) = S\cap M
\]
\end{thm}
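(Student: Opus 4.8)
The plan is to isolate a purely algebraic consequence of quadratic invariance and then combine it with a Neumann-series argument and the closedness of $S$. The algebraic core is this claim: if $S$ is a subspace and is QI under $G$, then $K(GK)^n \in S$ for every $K \in S$ and every integer $n \ge 0$. I would prove it by induction on $n$; the cases $n=0$ and $n=1$ are membership of $K$ and quadratic invariance itself. For the inductive step, apply QI to $K_1+K_2$ with $K_1,K_2 \in S$ and subtract $K_1GK_1$ and $K_2GK_2$ (both in $S$ by QI) to obtain the polarized relation $K_1GK_2 + K_2GK_1 \in S$; taking $K_1 = K$ and $K_2 = K(GK)^n$ (in $S$ by the inductive hypothesis) makes both summands equal to $K(GK)^{n+1}$, so $2K(GK)^{n+1}\in S$, and since the ground field is $\R$ or $\C$ we may divide by $2$. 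Only the subspace property of $S$ enters here.

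Next, the implication $h(S\cap M)=S\cap M \Rightarrow \text{QI}$. Fix any $K \in S$; for $|\alpha|$ small, $I-\alpha GK$ is invertible, so $\alpha K \in S\cap M$ and hence $h(\alpha K)\in S$. The resolvent expands as $h(\alpha K) = -\alpha K(I-\alpha GK)^{-1} = -\sum_{n\ge 0}\alpha^{n+1}K(GK)^n$, so the element $-\alpha^{-2}\bigl(h(\alpha K)+\alpha K\bigr) = KGK + \alpha K(GK)^2 + \cdots$ lies in $S$ because $S$ is a subspace. Letting $\alpha\to 0$ and using that $S$ is closed gives $KGK\in S$. This direction uses neither the hypothesis $N\cap S = M\cap S$ nor that $K$ itself lies in $M$.

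For the converse, assume $S$ is QI under $G$. Since $h$ maps $M$ into $M$ and is an involution there, it suffices to show $h(K)\in S$ for every $K\in S\cap M$; the reverse inclusion $S\cap M\subseteq h(S\cap M)$ then follows by applying $h$ once more. Whenever the Neumann series for $(I-GK)^{-1}$ converges in norm — for instance whenever the spectral radius of $GK$ is below $1$ — the partial sums of $h(K) = -\sum_{n\ge0}K(GK)^n$ all lie in $S$ by the algebraic core, so $h(K)\in S$ by closedness. The hypothesis $N\cap S = M\cap S$ is precisely what makes this enough: it identifies, within $S$, the controllers for which $I-GK$ is merely invertible with those for which the Neumann construction is legitimate, so the argument applies verbatim to every element of $S\cap M$.

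I expect the last step to be the main obstacle. Purely algebraic manipulation never produces $(I-GK)^{-1}$ for a general $K\in S\cap M$; one must either lean on the side condition $N\cap S = M\cap S$ or argue by analytic continuation of $t\mapsto h(tK)$ along a path in $M$, controlling membership in the closed set $S$ along the way. It is exactly this reliance on topological facts — closedness of $S$ and norm-convergence of resolvents — that the algebraic reformulation in the remainder of the paper is designed to avoid.
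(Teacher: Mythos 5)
This theorem is not proved in the paper at all: it is quoted verbatim from~\cite{rotkowitz02} as background, and the only related argument the paper supplies is the proof sketch of Theorem~\ref{thm:qi_matrices} (the matrix case), whose ``analytic approach'' is exactly the route you take. Your polarization identity for $K(GK)^n\in S$, the Neumann-series expansion of $h(\alpha K)$, and the use of closedness of $S$ all coincide with that sketch and with the original proof in~\cite{rotkowitz02}, so in that sense your proposal is the intended argument rather than a new one. Your converse direction ($h(S\cap M)=S\cap M\Rightarrow$ QI) is complete and correct as written.

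The one genuine gap is in the forward direction, where you equate $N$ with ``the set where the Neumann construction is legitimate.'' It is not: $N$ is the set of $K$ for which $1$ lies in the \emph{unbounded connected component} of the resolvent set of $GK$, which is strictly larger than the set where the series $\sum_n K(GK)^n$ converges in norm. The hypothesis $N\cap S=M\cap S$ therefore does not let you apply the Neumann argument ``verbatim to every element of $S\cap M$''; it only guarantees that every $K\in S\cap M$ can be joined, through the resolvent set, to the region $|\lambda|>\lVert GK\rVert$ where the series for $K(\lambda I-GK)^{-1}$ does converge to an element of $S$. The missing step is the continuation mechanism: take any bounded linear functional $\phi$ annihilating the closed subspace $S$, observe that $\lambda\mapsto\phi\bigl(K(\lambda I-GK)^{-1}\bigr)$ is analytic on the resolvent set and vanishes for large $|\lambda|$, conclude by analytic continuation that it vanishes on the whole unbounded component --- in particular at $\lambda=1$ --- and then invoke Hahn--Banach to recover $K(I-GK)^{-1}\in S$. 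You correctly flag this as ``the main obstacle'' in your closing paragraph, but the proof as written leaves it unexecuted; with that paragraph filled in, the argument matches~\cite{rotkowitz02}.
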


In Theorem~\ref{thm:qi_banach}, $L(\cdot,\cdot)$ denotes the set of bounded linear operator from the first argument to the second. Also, $N$ is the set of $K\in L(\mathcal{Y},\mathcal{U})$ such that $1\in \rho_{\textup{uc}}(GK)$, the unbounded connected component of the resolvent set of $GK$. The condition that $N\cap S = M\cap S$ is admittedly technical in nature, but the result of the theorem is very simple; quadratic invariance is equivalent to $S$ being invariant under $h$.  

\begin{thm}[see~\cite{rotkowitz06}] \label{thm:qi_topological}
Suppose $G \in \mathcal{L}(L_{2e}^n,L_{2e}^m)$ and $S \subseteq \mathcal{L}(L_{2e}^m,L_{2e}^n)$ is an inert closed subspace. Then
\[
S\text{ is QI with respect to }G
\quad \iff \quad
h(S) = S
\]
\end{thm}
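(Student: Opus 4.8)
The plan is to reduce this to two purely formal identities for $h$ together with one genuinely analytic fact, namely convergence of the Neumann series for $(I-GK)^{-1}$. The inert hypothesis is what supplies that analytic fact: it guarantees both that $S \subseteq M$ (so that $h(S)$ is even defined) and that, for every $K \in S$, the series $(I-GK)^{-1} = \sum_{n=0}^{\infty}(GK)^n$ converges in $\mathcal{L}(L_{2e}^m,L_{2e}^n)$. From $h(K) = -K(I-GK)^{-1}$, equivalently $h(K)(I-GK) = -K$ and $(I-KG)h(K) = -K$, one reads off the identities $h(K) + K = h(K)GK = KGh(K)$ and, by expanding the resolvent, $h(K) = -\sum_{n=0}^{\infty} K(GK)^n$. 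I would record these first.

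For the direction $S$ QI $\implies h(S) = S$, the first step is a polarization lemma: since $S$ is a subspace, applying quadratic invariance to $K_1 + K_2$ and subtracting $K_1GK_1, K_2GK_2 \in S$ gives $K_1 G K_2 + K_2 G K_1 \in S$ for all $K_1,K_2 \in S$. Specializing to $K_1 = K$, $K_2 = K(GK)^{n-1}$ and dividing by $2$, an easy induction on $n$ yields $K(GK)^n \in S$ for every $K \in S$ and every $n \ge 0$. Consequently each partial sum $-\sum_{n=0}^{N}K(GK)^n$ lies in $S$; as $S$ is closed and the series converges to $h(K)$, we get $h(K) \in S$, i.e. $h(S) \subseteq S$. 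Finally, since $h$ is a bijection of $M$ with $h\circ h = \mathrm{id}$, it preserves inclusions among subsets of $M$, so $h(S) \subseteq S$ forces $S = h(h(S)) \subseteq h(S)$, whence $h(S) = S$.

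For the converse, assume $h(S) = S$ and fix $K \in S$. For every real $t$ of small enough modulus, $tK \in S$, hence $h(tK), h(-tK) \in S$; substituting $tK$ into the series gives $h(tK) + h(-tK) = -2\sum_{n \text{ odd}} t^{n+1} K(GK)^n = -2t^2\, KGK + O(t^4)$. Therefore $\tfrac{-1}{2t^2}\bigl(h(tK) + h(-tK)\bigr) \in S$ for all small $t \ne 0$ and converges to $KGK$ as $t \to 0$; closedness of $S$ then gives $KGK \in S$, so $S$ is QI.

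The algebra here --- the two identities and the polarization induction --- is routine and would go through in essentially any setting. The hard part, and the reason this belongs to the analytic rather than the algebraic theory, is the handling of the infinite Neumann series: one must verify that it converges in the operator topology on the extended spaces, that the truncations stay in $S$, and that $t \mapsto h(tK)$ behaves like a convergent power series so that the limiting argument in the converse is legitimate. This is exactly where closedness of $S$ and the inert hypothesis are indispensable --- and exactly the dependence on functional analysis that the algebraic reformulation in the remainder of the paper is designed to remove.
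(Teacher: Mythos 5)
This theorem is quoted from~\cite{rotkowitz06} as background; the paper you are working from does not prove it, so the only point of comparison is the sketch of the ``analytic approach'' given in the proof of Theorem~\ref{thm:qi_matrices}. Your argument is essentially that standard analytic proof, and its skeleton is sound: the polarization identity you use to get $K(GK)^n\in S$ is exactly the identity displayed in the paper's matrix-case proof, the forward direction via partial sums plus closedness is the same strategy the paper sketches, and your converse via $\tfrac{-1}{2t^2}\bigl(h(tK)+h(-tK)\bigr)\to KGK$ is a correct (slightly symmetrized) version of the usual limit argument; the simpler identity $h(tK)+tK=h(tK)G(tK)=-t^2KGK+O(t^3)$ would do the same job without needing $h(-tK)$.

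The one substantive gap is that you assert, rather than derive, the analytic fact on which everything rests: that inertness makes the Neumann series $\sum_n (GK)^n$ converge in the topology of $\mathcal{L}(L_{2e}^m,L_{2e}^m)$ for \emph{every} $K\in S$, with no smallness condition on $GK$. This is not automatic --- in the Banach-space setting of Theorem~\ref{thm:qi_banach} the series converges only for $\|GK\|<1$ and one must patch the argument with analytic continuation, which is precisely why that theorem carries the extra hypothesis $N\cap S=M\cap S$. In the extended-space setting the claim is true, but for a specific reason you should state: inertness means the impulse response of $GK$ is bounded on every finite interval, so the iterated (Volterra) convolutions satisfy a factorial bound of the form $\|(GK)^n u\|_{[0,T]}\le \tfrac{(CT)^n}{n!}\|u\|_{[0,T]}$, giving convergence on every finite horizon and hence in the $L_{2e}$ topology; the same estimate justifies the term-by-term manipulations and the limit $t\to 0$ in your converse. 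Without that lemma the proof is an outline rather than a proof, and it is exactly the piece of functional analysis that the algebraic results in Sections~\ref{sec:rings_and_fields}--\ref{sec:rationals} are designed to avoid.
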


In Theorem~\ref{thm:qi_topological}, $\mathcal{L}(\cdot,\cdot)$ denotes the set of continuous linear maps from the first argument to the second. The requirement that $S$ be \emph{inert} means that the impulse response matrix of $GK$ must be entry-wise bounded over every finite time interval for all $K\in S$. Among other things, this technical condition guarantees that $I-GK$ is always invertible, and so $h$ is well-defined over all of $S$. An analogous result to Theorem~\ref{thm:qi_topological} in which $L_{2e}$ is replaced by $\ell_{2e}$ is also provided in~\cite{rotkowitz06}.

It is interesting to note that both sides of the equivalences proved in Theorems~\ref{thm:qi_banach} and~\ref{thm:qi_topological} are purely algebraic statements. In other words, they can be stated in terms of a finite number of algebraic operations (addition, multiplication, inversion). This seems at odds with the technical assumptions required in the theorems. For example, $S$ being a closed subspace means that $S$ should contain all of its limit points. This is an analytic concept requiring an underlying norm or a topology at the very least.

This observation is the starting point for this work, where we show
that \emph{invariance results} akin to Theorems~\ref{thm:qi_banach}
and~\ref{thm:qi_topological} can be obtained in a purely algebraic
setting without requiring anything more than well-defined addition,
multiplication, and inversion. This makes rings and fields the natural
objects to work with, and we will discuss them at greater length in
Section~\ref{sec:prelim}. In Section~\ref{sec:examples} we
give three specific settings where these  algebraic tools offer a
natural framework for modeling control systems. These are the cases
of sparse controllers, networks with delays, and multidimensional systems.

\section{The matrix case}
\label{sec:matrix_case}

The real matrix case is an example that illustrates when quadratic
invariance may be treated either analytically or algebraically. In
this section, we present the invariance result in the matrix case, and
give an outline of the proof using both the existing analytic
approach~\cite{rotkowitz02,rotkowitz06} and the algebraic approach
that is expanded upon in more detail in later sections of this
work. We present the proofs in sufficient detail to highlight the
mathematical machinery being used, but we skip over less relevant
details in the interest of clarity.

\begin{thm}[QI for matrices] \label{thm:qi_matrices}
Suppose $G\in \R^{m\times n}$ and $S\subseteq \R^{n\times m}$ is a subspace. Then the following holds.
\[
\text{$S$ is QI with respect to $G$}
\quad\iff\quad
h(S\cap M) = S\cap M
\]
\end{thm}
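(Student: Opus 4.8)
The plan is to prove the two implications separately, exploiting the fact that the map $h$ is an involution on $M$. For the direction ($\Leftarrow$), suppose $h(S\cap M) = S\cap M$ and take any $K\in S$. I would first argue that $S\cap M \ne \emptyset$: since $0\in S$ (as $S$ is a subspace) and $I - G\cdot 0 = I$ is invertible, we have $0\in M$. Actually the subtlety is whether a given $K\in S$ lies in $M$. In the matrix case, invertibility of $I-GK$ is a generic (Zariski-open) condition, so I would use a scaling argument: for all but finitely many $t\in\R$, the matrix $I - G(tK)$ is invertible, hence $tK\in S\cap M$. Applying the hypothesis, $h(tK) = -tK(I-tGK)^{-1} \in S\cap M \subseteq S$. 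Now I expand $h(tK)$ as a matrix-valued rational function of $t$ and extract information. The key computation is the power-series (Neumann-type) expansion $-tK(I-tGK)^{-1} = -tK - t^2 KGK - t^3 KGKGK - \cdots$, valid for small $t$. Since $S$ is a subspace (closed under scalar multiples and addition) and $h(tK)\in S$ for infinitely many $t$ near $0$, each coefficient of $t^j$ must lie in $S$; in particular the $t^2$ coefficient gives $KGK \in S$, which is exactly quadratic invariance.

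For the direction ($\Rightarrow$), suppose $S$ is QI with respect to $G$, i.e. $KGK\in S$ for all $K\in S$. I want to show $h(K)\in S$ for every $K\in S\cap M$ (the reverse inclusion $h(S\cap M)\supseteq S\cap M$ then follows because $h$ is an involution: if $K\in S\cap M$, then $h(K)\in S\cap M$ forces $K = h(h(K)) \in h(S\cap M)$). The route is again the scaling/series argument, but now run in reverse: by induction, quadratic invariance implies $K(GK)^j \in S$ for all $j\ge 1$ — indeed $KGK\in S$ by hypothesis, and then $(KGK)G(KGK) = K(GK)^3\in S$, $K(GK)\cdot G\cdot K(GK)G\cdots$ — one has to be slightly careful, but the standard fact (as in~\cite{rotkowitz06}) is that QI of a subspace is equivalent to $K(GK)^j\in S$ for all odd $j$, and combined with the subspace structure and the fact that $KGK\in S$ lets one also reach even powers via $K(GK)^j$ being expressible from lower terms. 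Then for $t$ small enough that the Neumann series converges, the partial sums $-\sum_{j=1}^N t^j K(GK)^{j-1}$ all lie in $S$, and I would like to conclude $h(tK)\in S$.

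Here is where the main obstacle lies: passing from "all partial sums lie in $S$" to "the limit $h(tK)$ lies in $S$" requires $S$ to be closed — precisely the analytic hypothesis this paper wants to avoid. In the matrix case this is a non-issue because a subspace of the finite-dimensional space $\R^{n\times m}$ is automatically closed, so the limit does lie in $S$; thus for small $t$, $h(tK)\in S\cap M$. To remove the smallness restriction on $t$ (and handle a general $K\in S\cap M$ rather than just small multiples), I would observe that $h(tK)$ is a rational function of $t$ taking values in the fixed finite-dimensional subspace $S$ on an open interval around $0$, hence — being rational — it takes values in $S$ wherever it is defined, in particular at $t=1$, giving $h(K)\in S$. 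This rational-continuation trick is the finite-dimensional stand-in for the analytic machinery, and it is the step I expect to require the most care to state cleanly; it is also exactly the place where the later sections must work harder, replacing "finite-dimensional subspace, hence closed, and rational functions extend" with genuinely algebraic hypotheses (modules over a ring, the map $h$ being a well-defined rational operation). Finally I would note that the whole argument is symmetric enough that both inclusions in $h(S\cap M)=S\cap M$ drop out together once the involution property is invoked.
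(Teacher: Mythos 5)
Your argument is correct, but for the forward direction it reproduces the \emph{analytic} route that the paper presents only as a foil: Neumann series, closedness of $S$ (automatic in finite dimensions, as you note), and continuation in the scaling parameter. The paper's point in this theorem is the second, algebraic argument it gives alongside that one: by Cayley--Hamilton, $(I-GK)^{-1}$ is a polynomial of degree at most $m-1$ in $(I-GK)$, so $h(K)$ is a \emph{finite} $\R$-linear combination $q_0K+q_1KGK+\dots+q_{m-1}K(GK)^{m-1}$ of elements of $S$, and membership in $S$ follows with no series, no limit, and no continuation. That finite expansion is precisely what survives when $\R$ is replaced by a commutative ring in the later sections, which is why the paper emphasizes it; your rational-continuation step, by contrast, is the piece that does not generalize and that the ring-theoretic machinery (adjugates, Vandermonde matrices, residue fields) is built to replace. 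For the converse, the paper's matrix-case proof merely defers to Section~\ref{sec:rings_and_fields}; your coefficient extraction from the power series of $h(tK)$ is the analytic counterpart of the paper's algebraic extraction of the coefficients of $rK_0\adj(I-rGK_0)$, and both isolate the degree-two coefficient to recover $K_0GK_0$, so this half is a legitimately different (and in the matrix case simpler) argument. Two small repairs: (i) your induction giving $K(GK)^j\in S$ for all $j$ is only gestured at --- iterating QI directly on $K(GK)^i$ yields only the powers $1,3,7,\dots$; the clean step is the polarization identity
\begin{equation*}
K(GK)^{i+1} = \tfrac12 \Bigl( \bigl(K+K(GK)^i\bigr)G\bigl(K+K(GK)^i\bigr) - KGK -\bigl( K(GK)^i\bigr)G\bigl(K(GK)^i\bigr) \Bigr),
\end{equation*}
which the paper states explicitly and which is the reason the ring version later requires $2_R$ to be a unit; (ii) when extracting the $t^2$ coefficient you do not actually need closedness of $S$: any subspace of $\R^{n\times m}$ is the common kernel of finitely many linear functionals, and composing the identity with these reduces the claim to a scalar power series vanishing on an interval.
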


\begin{proof}
We outline a proof of the forward direction $(\implies)$. If $S$ is QI
with respect to $G$, then by definition we have $KGK\in S$ for all
$K\in S$. The first step is to show that $K(GK)^i \in S$ for
$i=1,2,\dots$ as well. This can be proven by induction using the
identity
\begin{equation*}
K(GK)^{i+1} = \frac12 \Bigl( \bigl(K+K(GK)^i\bigr)G\bigl(K+K(GK)^i\bigr) - KGK -\bigl( K(GK)^i\bigr)G\bigl(K(GK)^i\bigr) \Bigr)
\end{equation*}
Next, we examine the function $h(K) = K(I-GK)^{-1}$ when $K\in S\cap M$. It suffices to show that $h(K)\in S$, since $h$ is involutive. We prove this result first via an analytic approach similar to the one used in~\cite{rotkowitz02}, and then using an algebraic approach.

The \emph{analytic approach} is to use an infinite series expansion. For $\alpha \in \C$, we have the following convergent series.
\begin{align}\label{eq:infsum}
K(I- \alpha GK)^{-1} = \sum_{i=0}^\infty K(GK)^i \, \alpha^{i}
\quad\text{for $|\alpha| < \frac{1}{\|GK\|}$}
\end{align}
Since $K(GK)^i\in S$ for all $K\in S$, and $S$ is a finite-dimensional subspace and therefore closed, the infinite sum converges to an element of $S$. Using an analytic continuation argument~\cite{rotkowitz02}, one can show that $K(I-\alpha GK)^{-1}\in S$ for all $\alpha$ such that $\det(I - \alpha GK) \ne 0$. It then follows that $h(K) \in S$ for all $K\in S\cap M$, as required.

The \emph{algebraic approach} is to use a finite series expansion. Pick some $K\in S$ such that $(I-GK) \in \R^{m\times m}$ is invertible. By the Cayley-Hamilton theorem, there exist $p_0,\dots,p_{m-1} \in \R$ such that
\[
(I-GK)^{-1} = p_0 I + p_1 (I-GK) + \dots + p_{m-1} (I-GK)^{m-1}
\]
Expanding and collecting like powers of $GK$, we find that
\[
K(I-GK)^{-1} = q_0 K + q_1 KGK + \dots + q_{m-1} K(GK)^{m-1}
\]
for some $q_0,\dots,q_{m-1}\in \R$. Once again, $K(GK)^i \in S$ for all $K\in S$, so every term in this finite sum belongs to the subspace $S$ and therefore $K(I-GK)^{-1} \in S$. The difference is that we did not require an analytic continuation, nor did we make use of the fact that $S$ is closed.

The converse direction $(\impliedby)$ can also be proven either using an analytic argument as in~\cite{rotkowitz02}, or using an algebraic argument as we will develop in Section~\ref{sec:rings_and_fields}.
\end{proof}

Notice that
 we give two proofs of the forward direction of
 Theorem~\ref{thm:qi_matrices}, one analytic and the other
 algebraic. The analytic approach is based on convergence and hence
 depends on the topology. Since this particular result is only stated
 for matrices, the choice of topology does not matter, but for more
 general convolution operators on infinite signal spaces the topology
 has a significant effect on the applicability of the result and the
 technical machinery required to effect the proof.  However, the
 algebraic approach is much more simple, and only relies on addition,
 multiplication and inversion.

The main results of this paper, given in
Sections~\ref{sec:rings_and_fields} and~\ref{sec:rationals},
generalize and expand upon the algebraic approach used above to
matrices with entries that belong to a commutative ring $R$.

\section{Algebraic preliminaries}
\label{sec:prelim}

The fundamental algebraic properties we wish to capture are simply
addition and multiplication. This leads naturally to rings and fields,
which are a fundamental building block of abstract algebra. 
These concepts also provide the framework which is commonly used to
state many widely-used results in control theory. For example, the set
of real-rational transfer functions is a field, and the subset of
proper ones is a ring. This viewpoint can be extremely useful, for
example, when parameterizing all stabilizing
controllers~\cite{vidyasagar}.  We refer the reader
to~\cite{lang_undergraduate} for an introduction to these concepts.

We now explain some of the conventions used throughout this paper. The
integers, reals, rationals, and complex numbers are denoted by $\Z$,
$\R$, $\Q$, and $\C$, respectively.  We use $R$ to denote an arbitrary
commutative ring with identity, and $\F$ to denote an arbitrary
field. The additive and multiplicative identity elements of $R$ are
denoted $0_R$ and $1_R$, respectively, but we will often omit the
subscript when it is clear by context. An invertible element of $R$ is
called a \eemph{unit}, and the set of units is written as $U(R)$. We
write $R[x]$ and $\F(x)$ to respectively denote the ring of
polynomials and the field of rational functions in the indeterminate
$x$. If $H \subseteq R$ is an ideal, we write $H\ideal R$. If $M\ideal
R$ is maximal ideal, the associated quotient ring $R/M$ is a field,
and is called the \eemph{residue field}. Finally, we make use of the
notion of an $R$-module, which is the generalization of a vector space
when the scalars belong to a ring $R$ rather than a field $\F$.

In this paper, we consider finite matrices with elements that belong
to $R$. Much of the familiar linear algebra theory carries over to
this more general setting. We refer the reader to~\cite{curtis} for
introduction to abstract linear algebra. We write $R^{m\times n}$ to
mean the set of $m\times n$ matrices with entries in $R$. Matrix
multiplication between matrices of compatible dimensions is defined in
the standard way. When the matrices are square, we write $M_n(R) =
R^{n\times n}$, which is a ring. The identity matrix is denoted $I_R$,
that is, the matrix whose diagonal and off-diagonal entries are $1_R$
and $0_R$, respectively.

Many concepts from matrix theory carry over to the more general
setting. Specifically, if $A \in M_n(R)$, the determinant $\det :
M_n(R) \to R$ is defined by the conventional Laplace expansion. The
adjugate (or classical adjoint) $\adj : M_n(R) \to M_n(R)$ also makes
sense, as it is defined in terms of determinants of submatrices. The
fundamental identity for adjugates holds as well, namely
\begin{equation}\label{eq:fundamental_adjoint_property}
A \adj(A) = \adj(A) A = \det(A) I_R.
\end{equation}
The matrix $A \in M_n(R)$ is invertible if and only if $\det(A) \in U(R)$. In this case, the inverse is unique, and is given by
\[
 A^{-1} = \left( \det(A) \right)^{-1} \adj(A)
\]
For an introduction to the adjugate and associated results, we refer the reader to~\cite{robinson05}. We now state a fundamental result.

\begin{prop}\label{prop:ch}
Suppose $A \in M_n(R)$. Let $p_A \in R[x]$ be the characteristic polynomial of $A$, given by $p_A(x) = \det(A - x I)$. Suppose $p_A$ has the form
\[
p_A(x) = p_0 + p_1 x + \dots + p_n x^n
\]
Then the following equations hold in $M_n(R)$.
\begin{enumerate}[(i)]
\item $p_0 I + p_1 A + \dots + p_n A^n = 0$ \label{i:ch}
\item $\adj(A) = -( p_1 I + p_2 A + \dots + p_n A^{n-1} )$ \label{i:adj} 
\end{enumerate}
\end{prop}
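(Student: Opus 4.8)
The plan is to pass to the polynomial ring $R[x]$, which is again a commutative ring with identity, so that the determinant, the adjugate, and the fundamental identity~\eqref{eq:fundamental_adjoint_property} all make sense for matrices in $M_n(R[x])$. The central object is $A - xI \in M_n(R[x])$, whose determinant is $p_A(x)$ by the definition of the characteristic polynomial.

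The first step is a degree count. Each entry of $\adj(A-xI)$ is, up to sign, the determinant of an $(n-1)\times(n-1)$ submatrix of $A-xI$, and since the entries of any such submatrix are affine in $x$, this determinant is a polynomial in $x$ of degree at most $n-1$. Collecting coefficients we may therefore write
\[
\adj(A-xI) = B_0 + B_1 x + \dots + B_{n-1}x^{n-1}
\]
with $B_0,\dots,B_{n-1} \in M_n(R)$. Substituting this expansion, together with $\det(A-xI) = p_A(x)$, into the identity~\eqref{eq:fundamental_adjoint_property}, now read in $M_n(R[x])$, yields
\[
(A-xI)\bigl(B_0 + B_1 x + \dots + B_{n-1}x^{n-1}\bigr) = \bigl(p_0 + p_1 x + \dots + p_n x^n\bigr) I.
\]
Expanding the left-hand side and matching coefficients of $x^0, x, \dots, x^n$ produces the identities in $M_n(R)$
\[
A B_0 = p_0 I, \qquad A B_k - B_{k-1} = p_k I \quad (1 \le k \le n-1), \qquad -B_{n-1} = p_n I.
\]

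To obtain part~(i), I would left-multiply the identity carrying the coefficient of $x^k$ by $A^k$ and sum the resulting $n+1$ equations: the left-hand side telescopes, since the summand $A^{k+1}B_k$ contributed by one equation cancels the summand $A^k B_{k-1}$ contributed by the next, leaving $0$, while the right-hand side is $\sum_{k=0}^{n} p_k A^k$. To obtain part~(ii), note that it is not a formal consequence of part~(i) --- over a ring one cannot simply cancel the common left factor $A$ --- so instead I would solve the recursion $B_{k-1} = A B_k - p_k I$ downward, starting from $B_{n-1} = -p_n I$; a short induction gives $B_0 = -(p_1 I + p_2 A + \dots + p_n A^{n-1})$. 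Since evaluating $\adj(A-xI)$ at $x = 0$ recovers $\adj(A)$, the adjugate entries being polynomial functions of the matrix entries, we conclude $\adj(A) = B_0$, which is precisely the asserted formula.

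The only real subtlety, which I would flag rather than any of the calculations, is the temptation to ``set $x = A$'' directly in the polynomial identity above. This is illegitimate because $M_n(R)$ is noncommutative and evaluation at a non-central matrix is not a ring homomorphism; the telescoping sum is exactly the standard device that sidesteps this issue, as it only ever left-multiplies genuine equalities in $M_n(R)$ by powers of $A$.
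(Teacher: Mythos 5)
Your proposal is correct and follows essentially the same route as the paper's proof: apply the adjugate identity to $A-xI$ in $M_n(R[x])$, expand $\adj(A-xI)=B_0+B_1x+\dots+B_{n-1}x^{n-1}$, match coefficients, and recover (i) by the weighted telescoping sum and (ii) from $B_0=\adj(A)$. Your downward induction on the recursion $B_{k-1}=AB_k-p_kI$ is just a repackaging of the paper's step of left-multiplying the $p_k$ equation by $A^{k-1}$ and summing, and your remark that one cannot obtain (ii) from (i) by cancelling $A$ is a correct and worthwhile observation.
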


\begin{proof}
See Section~\ref{sec:proofs}.
\end{proof}

Item~\eqref{i:ch} in Proposition~\ref{prop:ch} is commonly known as the Cayley-Hamilton theorem. This result plays an important role in our approach because it enables us to express quantities such as $(I-GK)^{-1}$ as finite sums.

\section{Invariance for rings and fields}\label{sec:rings_and_fields}

In this section, we take the notion of quadratic invariance discussed in the introduction and show how it fits into the framework of matrices over commutative rings or fields. These results generalize the algebraic invariance result for real matrices from Section~\ref{sec:matrix_case}. Complete proofs for all the results of this section are given in Section~\ref{sec:proofs}.

Our first main invariance result holds over matrices whose entries belong to an arbitrary commutative ring $R$ with identity. Terminology is explained after the theorem statement.

\begin{thm}[QI for rings]\label{thm:qi_rings}
Suppose $G\in R^{m\times n}$ and $S\subseteq R^{n\times m}$ is an $R$-module.
	\begin{enumerate}
	\item If $2_R\in U(R)$, then
	\begin{equation*}
		S\text{ is QI with respect to }G
		\quad \implies \quad
		K \adj(I_R - GK) \in S \text{ for all }K\in S
	\end{equation*}
	\item If every residue field of $R$ has at least $\min(m,n)+1$ elements, then
	\begin{equation*}
		S\text{ is QI with respect to }G
		\quad \impliedby \quad 
		K \adj(I_R - GK) \in S \text{ for all }K\in S
	\end{equation*}
	\end{enumerate}
\end{thm}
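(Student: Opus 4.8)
The plan is to treat the two implications separately.

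\emph{Part (1)} (the direction $\implies$) follows the algebraic argument in the proof of Theorem~\ref{thm:qi_matrices}, now carried out over $R$. First I would show, by induction on $i$, that $K(GK)^i\in S$ for every $i\ge 0$ and every $K\in S$. The cases $i=0,1$ are immediate, the latter by quadratic invariance. For the inductive step I would use the identity from the proof of Theorem~\ref{thm:qi_matrices}, namely $K(GK)^{i+1}=\tfrac12((K+K(GK)^i)G(K+K(GK)^i)-KGK-(K(GK)^i)G(K(GK)^i))$; here the hypothesis $2_R\in U(R)$ is exactly what lets one divide by $2$ inside the $R$-module $S$, and applying quadratic invariance to $K+K(GK)^i\in S$ and to $K(GK)^i\in S$ places each bracketed term in $S$. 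Then, by part~\eqref{i:adj} of Proposition~\ref{prop:ch} applied to $I_R-GK\in M_m(R)$, the matrix $\adj(I_R-GK)$ is an $R$-polynomial in $I_R-GK$, hence in $GK$, of degree at most $m-1$; multiplying on the left by $K$ expresses $K\adj(I_R-GK)$ as an $R$-linear combination of $K,KGK,\dots,K(GK)^{m-1}$, each of which lies in $S$, so $K\adj(I_R-GK)\in S$.

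\emph{Part (2)} (the direction $\impliedby$) is where the residue-field hypothesis is used. Fix $K\in S$; the goal is $KGK\in S$. The key preliminary step is the polynomial identity $K\adj(I_R-tGK)=\adj(I_R-tKG)K$ in $M_{n\times m}(R[t])$. I would derive this from the commutation relation $K(I_R-tGK)=(I_R-tKG)K$ by sandwiching with the two adjugates and using $A\adj(A)=\det(A)I$ together with Sylvester's determinant identity $\det(I_R-tGK)=\det(I_R-tKG)$ (valid over any commutative ring); this produces the desired relation with the common determinant factor appearing on both sides, and that factor may be cancelled because it has constant term $1_R$ and is therefore a non-zerodivisor in $R[t]$. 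Since the right-hand side has entries of degree at most $n-1$ and the left-hand side entries of degree at most $m-1$, the matrix polynomial $q(t):=tK\adj(I_R-tGK)$ has degree at most $d:=\min(m,n)$; and from the expansion $\adj(I_R-tGK)=I_R+(GK-\tr(GK)I_R)t+t^2(\cdots)$, obtained from $(I_R-tGK)\adj(I_R-tGK)=\det(I_R-tGK)I_R$, its coefficients of $t^0,t^1,t^2$ are $0$, $K$, and $KGK-\tr(GK)K$. Because $S$ is an $R$-module we have $tK\in S$, so the hypothesis gives $q(t)=(tK)\adj(I_R-G(tK))\in S$ for every $t\in R$.

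It remains to extract the coefficient $D:=KGK-\tr(GK)K$ of $t^2$ and to show $D\in S$; then $KGK=D+\tr(GK)K\in S$. Let $\bar D$ be the image of $D$ in $R^{n\times m}/S$; it suffices to prove $\mathrm{Ann}_R(\bar D)=R$, so suppose to the contrary that $\mathrm{Ann}_R(\bar D)\subseteq\mathfrak m$ for some maximal ideal $\mathfrak m$. Since the residue field $R/\mathfrak m$ has at least $d+1$ elements, choose $r_0,\dots,r_d\in R$ with pairwise distinct images in $R/\mathfrak m$, so that $\delta:=\prod_{0\le i<j\le d}(r_j-r_i)\notin\mathfrak m$. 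Writing $q(t)=\sum_{j=0}^d t^j c_j$ (so $c_2=D$) and letting $V=(r_i^{\,j})_{0\le i,j\le d}$ be the associated Vandermonde matrix, the equations $\sum_j r_i^{\,j}c_j=q(r_i)\in S$ together with $\adj(V)V=\det(V)I=\pm\delta I$ yield $\delta c_j\in S$ for all $j$, in particular $\delta D\in S$, i.e. $\delta\in\mathrm{Ann}_R(\bar D)\subseteq\mathfrak m$, contradicting $\delta\notin\mathfrak m$. Hence $\bar D=0$, so $KGK\in S$. The step I expect to be the main obstacle is exactly this last one: over a general commutative ring one cannot simply ``evaluate at enough points,'' and the argument hinges both on the sharp degree bound $\min(m,n)$ (which is what forces the $\adj$-identity and the non-zerodivisor cancellation) and on using the residue-field cardinality to manufacture, at the single relevant maximal ideal, a Vandermonde matrix with non-vanishing determinant.
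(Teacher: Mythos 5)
Your proposal is correct. Part (1) is essentially identical to the paper's argument: the polarization identity (the paper's Lemma~\ref{lem:kgk}) gives $K(GK)^i\in S$, and Proposition~\ref{prop:ch}\eqref{i:adj} expresses $K\adj(I_R-GK)$ as an $R$-combination of these. Part (2) follows the same overall outline as the paper --- replace $K$ by $rK$, view the hypothesis as membership of a polynomial in $r$ with matrix coefficients, and identify a low-order coefficient with $KGK$ modulo $S$ --- but differs in two worthwhile ways. First, where the paper extracts the individual coefficients via Lemma~\ref{lem:Helem}, which rests on the full characterization of left-invertible (generally non-square) Vandermonde matrices in Lemma~\ref{lem:inv_vandermonde} and the Cauchy--Binet formula, you use a local annihilator argument: if $\mathrm{Ann}_R(\bar D)$ were contained in a maximal ideal $\mathfrak m$, the residue-field cardinality would supply a square Vandermonde determinant outside $\mathfrak m$, a contradiction. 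This needs only the easy direction of the Vandermonde characterization and is arguably more economical. Second, you establish the commutation identity $K\adj(I-tGK)=\adj(I-tKG)K$ (via Sylvester's identity and cancellation of the non-zerodivisor $\det(I-tGK)$, which has constant term $1_R$) \emph{before} extracting coefficients, which yields the degree bound $\min(m,n)$ in one pass; the paper instead runs the $m\times m$ argument and then asserts a symmetric $n\times n$ argument, using the same identity without proof. Your explicit identification of the $t^2$ coefficient as $KGK-\tr(GK)K$ agrees with the paper's recursion $B_1=GK_0+q_1I$ since $q_1=-\tr(GK_0)$, and it has the incidental benefit of handling the degenerate cases $m=1$ or $n=1$ (where that coefficient vanishes identically) without the separate remark the paper makes.
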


The notion of $R$-module is analogous to that of a subspace. That is, $S$ contains all linear combinations of its elements, where the linear combinations have coefficients in $R$.

Theorem~\ref{thm:qi_rings} contains several technical conditions which we will now explain. For the $(\implies)$ result, the condition $2_R \in U(R)$ means that $2_R := 1_R + 1_R$ must be a unit. We will now show that this condition is necessary by providing a counterexample. The condition is satisfied for example in the ring of rationals~$\Q$, but not in the ring of integers~$\Z$. Consider therefore the following integer example.
\[
 S =\set{\bmat{2x & y & z\\y & z & 0 \\ z & 0 & 0}}{x,y,z\in \Z},\quad
G = \bmat{0&0&0\\0&0&1\\0&1&0}
\]
It is straightforward to check that $S$ is a $\Z$-module, and is quadratically invariant with respect to $G$. Now consider the following particular element of $S$.
	\[
	K_0 = \bmat{0&0&1\\0&1&0\\1&0&0}\in S
	\]
	and so
	\[
	K_0\adj(I-GK_0) = \bmat{1&-1&1\\-1&1&0\\1&0&0} \notin S
	\]
Therefore the first part of Theorem~\ref{thm:qi_rings} does not hold. For more details on why this is so, refer to the proof of Theorem~\ref{thm:qi_rings} in Section~\ref{sec:proofs}.

For the $(\impliedby)$ result, a residue field is the field obtained by taking the quotient $R/M$ for some maximal ideal $M \ideal R$. For example, in the ring of integers $\Z$, the maximal ideals are the sets $p\Z$ for some prime $p$. So for $p=2$, the associated ideal is the set of even integers, and the residue field is $\Z/2\Z$; the integers modulo $2$. This field only has two elements and so the conditions of the theorem would not be satisfied for matrices with at least two rows or two columns.

We now specialize the above invariance result to fields. Axiomatically, a field is simply a ring for which every nonzero element is a unit. The results of this section hold for an arbitrary field~$\F$. We begin by stating the invariance result, and then we explain the differences between the field and ring cases. In particular, several concepts become simpler when the ring in question is a field. 

\begin{thm}[QI for fields]\label{thm:qi_fields}
Suppose $G\in \F^{m\times n}$ and $S\subseteq \F^{n\times m}$ is a subspace over~$\F$. Further suppose that $\F$ contains at least $2\min(m,n)+1$ distinct elements and $\ch\F \ne 2$.
\[
S\text{ is QI with respect to }G
\quad \iff \quad
h(S\cap M) = S\cap M
\]
where $M = \set{ K\in \F^{n\times m} }{ \det(I-GK)\ne 0 }$.
\end{thm}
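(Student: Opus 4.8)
The plan is to deduce the field statement from the ring statement (Theorem~\ref{thm:qi_rings}) applied to $R = \F$, and then upgrade the "adjugate" characterization to the "$h$-invariance" characterization using the Cayley--Hamilton relation from Proposition~\ref{prop:ch} together with the involutive property of $h$. First I would observe that a field is a commutative ring with identity, that every subspace over $\F$ is in particular an $\F$-module, and that the only residue field of a field $\F$ is $\F$ itself (the zero ideal is the unique maximal ideal). So the two hypotheses $\ch\F \ne 2$ and "$\F$ has at least $2\min(m,n)+1$ elements" ensure both $2_\F \in U(\F)$ and "every residue field has at least $\min(m,n)+1$ elements," so both parts of Theorem~\ref{thm:qi_rings} apply. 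This gives the intermediate equivalence
\[
S \text{ is QI with respect to } G
\quad\iff\quad
K\adj(I-GK)\in S \text{ for all } K\in S.
\]
(The hypothesis count $2\min(m,n)+1$ rather than $\min(m,n)+1$ is presumably what is needed for the $K(GK)^i$ argument and the interpolation/Vandermonde step that appears in the full proof of Theorem~\ref{thm:qi_rings}; I would simply invoke that theorem as a black box here.)

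Next I would translate between $K\adj(I-GK)\in S$ and $h(K)\in S$. For $K\in S\cap M$, the matrix $I-GK\in\F^{m\times m}$ is invertible, so $(I-GK)^{-1} = (\det(I-GK))^{-1}\adj(I-GK)$, and since $\det(I-GK)\in\F$ is a nonzero scalar and $S$ is a subspace, $K\adj(I-GK)\in S \iff K(I-GK)^{-1}\in S \iff h(K)\in S$. Conversely, I want to show that if $h(K)\in S$ for all $K\in S\cap M$, then $K\adj(I-GK)\in S$ for \emph{all} $K\in S$, including those for which $I-GK$ is singular. The device here is the parametrized family $I-\alpha GK$: for all but finitely many $\alpha\in\F$, $\det(I-\alpha GK)\ne 0$, and $\alpha K\in S$, so $(\alpha K)(I - G(\alpha K))^{-1}\cdot\det(I-\alpha GK) = \alpha K\adj(I-\alpha GK)\in S$. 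Now $\alpha K\adj(I-\alpha GK)$ is a polynomial in $\alpha$ with coefficients in $\F^{n\times m}$; by Proposition~\ref{prop:ch}(ii) its coefficients are (up to sign) $K(GK)^{i}$ combinations. Since $S$ is a subspace and this polynomial vector lies in $S$ for more than $\min(m,n)$ values of $\alpha$ (a Vandermonde / polynomial-identity argument, using the element-count hypothesis), each coefficient lies in $S$; specializing $\alpha = 1$ yields $K\adj(I-GK)\in S$. This establishes
\[
K\adj(I-GK)\in S \text{ for all } K\in S
\quad\iff\quad
h(S\cap M)\subseteq S\cap M.
\]

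Finally I would note that $h(S\cap M)\subseteq S\cap M$ already forces equality: $h$ is an involution and maps $M$ bijectively onto $M$, so $h(h(K)) = K$ shows $S\cap M \subseteq h(S\cap M)$. Chaining the two equivalences gives exactly
\[
S \text{ is QI with respect to } G
\quad\iff\quad
h(S\cap M) = S\cap M.
\]
I expect the main obstacle to be the polynomial-coefficient extraction step: one must be careful that "holds for finitely-many-excepted $\alpha$" genuinely upgrades to "each matrix coefficient lies in $S$," which requires that $\F$ have strictly more than $\deg$ elements where $\deg$ is controlled by $\min(m,n)$ (this is precisely where a hypothesis on the size of $\F$ enters, and where one must track whether $\min(m,n)$ or $2\min(m,n)$ is the relevant bound). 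A secondary subtlety is making sure the degenerate case $S\cap M=\emptyset$ is handled — but quadratic invariance is vacuous on the "adjugate" side in a way consistent with $h$ being defined on the empty set, so this is a boundary remark rather than a real difficulty. The bulk of the genuinely hard work is already packaged inside Theorem~\ref{thm:qi_rings}, which is invoked rather than reproved.
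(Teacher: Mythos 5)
Your proposal is correct and follows essentially the same route as the paper's proof: both directions reduce to Theorem~\ref{thm:qi_rings} via the adjugate characterization, the forward direction divides by the nonzero scalar $\det(I-GK)$ and uses involutivity of $h$, and the converse uses the scaled family $rK$, the fact that $\det(I-rGK)$ excludes at most $\min(m,n)$ values of $r$, and the Vandermonde/coefficient-extraction argument --- which is exactly where the $2\min(m,n)+1$ hypothesis enters, as you anticipated. The only cosmetic difference is that the paper re-runs the ring-theorem's contradiction argument inline rather than first establishing $K\adj(I-GK)\in S$ for all $K\in S$ and then invoking part 2 of Theorem~\ref{thm:qi_rings} as a black box.
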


The \emph{characteristic} of a field $\F$, denoted $\ch\F$ is the smallest $k$ such that $\overbrace{1 + \dots + 1}^{k\text{ times}} = 0$. When there is no such $k$, then we say $\ch\F=0$. Note that requiring $\ch\F\ne 2$ is the same as the condition that $2_R\in U(R)$ when the ring $R$ is specialized to the field $\F$.

In the real-number case $\F=\R$, both technical assumptions are always satisfied because $\R$ has infinitely many elements and $\ch\R=0$. We then precisely recover Theorem~\ref{thm:qi_matrices}. Note that Theorem~\ref{thm:qi_fields} may also be applied to finite fields such as~$\Z/p\Z$, the field of integers modulo~$p$. 

\section{Invariance for rationals}\label{sec:rationals}

In this section, we specialize the ring and field invariance results
of Section~\ref{sec:rings_and_fields} to rational functions in
multiple variables. This leads to quadratic invariance results without
any technical requirement on $S$ such as closure or the existence of
limits. As we shall see in Section~\ref{sec:examples}, this framework
can accommodate systems with delays or spatiotemporal systems.

Let $\R(x)$ be the set of rational functions in the indeterminate $x$
with coefficients in $\R$. Because $\R(x)$ is a field, we may apply
Theorem~\ref{thm:qi_fields}. We obtain the following result.

\begin{thm}[QI for rationals]\label{thm:qi_rationals}
Suppose $G\in\R(x)^{m\times n}$, and $S \subseteq \R(x)^{n\times m}$ is an $\R(x)$-module.
\[
S \text{ is QI with respect to } G \quad \iff\quad h(S\cap M) = S\cap M
\]
\end{thm}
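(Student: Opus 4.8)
The plan is to deduce Theorem~\ref{thm:qi_rationals} from Theorem~\ref{thm:qi_fields} applied to the field $\F = \R(x)$. The only obstacle is that $\R(x)$ is an infinite field of characteristic zero, so both technical hypotheses of Theorem~\ref{thm:qi_fields} — namely that $\F$ contain at least $2\min(m,n)+1$ distinct elements and that $\ch\F\neq 2$ — hold automatically. Indeed $\ch\R(x) = \ch\R = 0 \neq 2$, and $\R(x)$ contains the infinitely many distinct constants $0,1,2,\dots$, hence in particular more than $2\min(m,n)+1$ of them. Thus every hypothesis of Theorem~\ref{thm:qi_fields} is met for any $m,n$.

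First I would note that a subspace of $\R(x)^{n\times m}$ \emph{over the field} $\R(x)$ is exactly the same thing as an $\R(x)$-module, since a module over a field is a vector space; so the hypothesis ``$S$ is an $\R(x)$-module'' is precisely the hypothesis ``$S$ is a subspace over $\R(x)$'' required by Theorem~\ref{thm:qi_fields}. Next I would observe that the set $M$ appearing in Theorem~\ref{thm:qi_rationals} is, by the conventions already fixed in the introduction, the domain of $h$, namely $M = \set{K\in\R(x)^{n\times m}}{I-GK \text{ is invertible}}$, and that invertibility of a square matrix over $\R(x)$ is equivalent to $\det(I-GK)\neq 0$ (since $\R(x)$ is a field, every nonzero element is a unit). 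Hence this $M$ coincides with the set $M = \set{K\in\F^{n\times m}}{\det(I-GK)\neq 0}$ from Theorem~\ref{thm:qi_fields}.

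With these identifications in place, the proof is a direct invocation: applying Theorem~\ref{thm:qi_fields} with $\F = \R(x)$ yields exactly the equivalence
\[
S \text{ is QI with respect to } G \quad\iff\quad h(S\cap M) = S\cap M,
\]
which is the claimed statement. There is essentially no computation and no genuine obstacle here; the content of the result is entirely that the technical side-conditions of the field theorem evaporate for $\R(x)$, which is why this section can be presented as a clean specialization. If one wanted to be slightly more careful, the one place to double-check is that the notion of QI in Definition~\ref{def:qi} ($KGK\in S$ for all $K\in S$) is stated ring-theoretically and so transfers verbatim to matrices over $\R(x)$, and that the involution $h(K) = -K(I-GK)^{-1}$ is well-defined on $M$ over $\R(x)$ — but both are immediate from the fact that $\R(x)$ is a field and matrix arithmetic over it behaves exactly as in the real case.
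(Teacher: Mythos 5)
Your proposal is correct and follows exactly the paper's own argument: specialize Theorem~\ref{thm:qi_fields} to $\F = \R(x)$, noting that $\R(x)$ is an infinite field with $\ch\R(x)=0\neq 2$, so both technical hypotheses hold automatically. The extra care you take in identifying the module/subspace notions and the set $M$ is consistent with the paper's (more terse) proof.
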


Theorem~\ref{thm:qi_rationals} is the simplest algebraic result
for quadratic invariance of rational functions, and provides the
technical basis for the remainder of this paper. However, it is not
directly applicable to most control systems, because for physical
models one typically has the constraint that the system $G$ is
proper. This applies for example if $G$ is a
transfer function that represents a causal time-invariant system, and
we seek a controller that is also causal and time-invariant.

Let $\R(s)_{\p}$ be the set of proper rational functions in the
indeterminate $s$, and let $\R(s)_{\sp} \subset \R(s)_{\p}$ denote the
strictly proper rationals. Note that proper rationals are a ring
rather than a field, because the inverse of a proper rational function
is generally not proper. The result is given below.

\begin{thm}[QI for proper rationals]\label{thm:qi_rationals_proper}
Suppose $G\in\R(s)_{\sp}^{m\times n}$, and $S \subseteq \R(s)_{\p}^{n\times m}$ is an $\R(s)_{\p}$-module.
\[
S \text{ is QI with respect to } G \quad \iff\quad h(S) = S
\]
\end{thm}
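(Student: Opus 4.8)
The plan is to reduce this to the field result Theorem~\ref{thm:qi_rationals} (equivalently Theorem~\ref{thm:qi_fields} applied to $\F = \R(s)$), and then handle the gap between "properness" and "invertibility." Observe first that $\R(s)_{\p}$ is a subring of the field $\R(s)$, and an $\R(s)_{\p}$-module $S \subseteq \R(s)_{\p}^{n\times m}$ is in particular closed under the ring operations needed for the QI definition (which only involves the products $KGK$, all of whose factors are proper since $G$ is strictly proper, so $KGK \in \R(s)_{\p}^{n\times m}$). So "QI with respect to $G$" has exactly the same meaning here as in the field setting, once we note the products stay proper.

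The key structural fact to exploit is that $G$ strictly proper forces $I - GK$ to be invertible \emph{as a matrix over} $\R(s)_{\p}$ for every $K \in \R(s)_{\p}^{n\times m}$ — not merely over $\R(s)$. Indeed, $GK$ is strictly proper, so $\det(I - GK)$ is a proper rational function whose value at $s = \infty$ equals $1 \ne 0$; hence $\det(I-GK)$ is a \emph{unit} in $\R(s)_{\p}$, and therefore $(I-GK)^{-1} = (\det(I-GK))^{-1}\adj(I-GK)$ has entries in $\R(s)_{\p}$ by~\eqref{eq:fundamental_adjoint_property} and the discussion of invertibility over rings. Consequently $h(K) = -K(I-GK)^{-1} \in \R(s)_{\p}^{n\times m}$ is well-defined for all $K$, which is why the statement reads $h(S) = S$ rather than $h(S\cap M) = S\cap M$: here $M$ (the maps making $I - GK$ invertible over $\R(s)$) contains all of $\R(s)_{\p}^{n\times m}$, so $S \cap M = S$.

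With these two observations the argument runs as follows. For the forward direction $(\implies)$: assume $S$ is QI with respect to $G$. Embed everything in $\F^{n\times m}$ with $\F = \R(s)$ and let $\tilde S = \operatorname{span}_\F S$ be the $\F$-subspace generated by $S$; one checks $\tilde S$ is still QI with respect to $G$ (the QI identity is $\F$-linear in the appropriate sense — more carefully, I would instead argue directly that for $K \in S$, the finite Cayley–Hamilton expansion $K(I-GK)^{-1} = q_0 K + q_1 KGK + \dots + q_{m-1}K(GK)^{m-1}$ from the proof of Theorem~\ref{thm:qi_matrices} has coefficients $q_i \in \R(s)_{\p}$, since the $p_i$ are coefficients of the characteristic polynomial of the \emph{proper} matrix $I - GK$ and the leading coefficient is a unit in $\R(s)_{\p}$, so each $q_i$ stays proper). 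Since $K(GK)^i \in S$ by quadratic invariance (same induction identity as in Theorem~\ref{thm:qi_matrices}, valid since $\operatorname{char}\R(s) = 0$) and $S$ is an $\R(s)_{\p}$-module, the whole sum lies in $S$; hence $h(K) \in S$. As $h$ is an involution on $\R(s)_{\p}^{n\times m}$, we get $h(S) = S$. For the converse $(\impliedby)$: assume $h(S) = S$. This is the hypothesis $h(S\cap M') = S\cap M'$ of Theorem~\ref{thm:qi_rationals} once we check $S \cap M' = S$ where $M'$ is defined with $\R(s)$ — but this equality is exactly the strict-properness observation above. Theorem~\ref{thm:qi_rationals} then yields that $S$ is QI with respect to $G$.

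The main obstacle is the direction of Theorem~\ref{thm:qi_rationals} that I am invoking as a black box: its $(\impliedby)$ proof (ultimately the ring/field machinery of Section~\ref{sec:rings_and_fields}) is the substantive content, and I should make sure the hypotheses transfer — in particular that an $\R(s)_{\p}$-module $S$, when we pass to $\tilde S = \operatorname{span}_{\R(s)} S$, satisfies $h(\tilde S \cap M) = \tilde S \cap M$ so that Theorem~\ref{thm:qi_fields} applies to $\tilde S$, and then that QI of $\tilde S$ descends to QI of $S$ (which it does trivially, since $S \subseteq \tilde S$ and the QI condition is a pointwise "for all $K$" statement). The one genuinely new check beyond bookkeeping is the properness/unit claim about $\det(I-GK)$ and the resulting propriety of $(I-GK)^{-1}$ and of the Cayley–Hamilton coefficients $q_i$; everything else is importing Theorem~\ref{thm:qi_rationals} plus noting that $M$ becomes vacuous. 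I would also remark that strict properness (rather than mere properness) of $G$ is essential precisely to make $1$ the value of $\det(I-GK)$ at infinity, and hence a unit.
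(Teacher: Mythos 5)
Your forward direction is sound and close in spirit to the paper's: the essential points --- that $\det(I-GK)$ is a unit of $\R(s)_{\p}$ because $GK$ is strictly proper, so $(I-GK)^{-1}$ and hence the Cayley--Hamilton coefficients $q_i$ stay proper; that $2^{-1}\in\R(s)_{\p}$ so the polarization identity gives $K(GK)^i\in S$; and that the $\R(s)_{\p}$-module property then closes the finite sum --- are all correct. (One minor slip: the unit you need in order for the $q_i$ to be proper is the constant coefficient $p_0=\det(I-GK)$ of the characteristic polynomial, not the leading coefficient, which is always $\pm 1$.)

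The converse direction has a genuine gap. Theorem~\ref{thm:qi_rationals} requires $S$ to be an $\R(s)$-module, i.e.\ a subspace over the full field of rational functions, and an $\R(s)_{\p}$-module generally is not one (for instance $\R(s)_{\p}^{n\times m}$ itself is not closed under multiplication by $s$), so you cannot apply that theorem to $S$ directly. Your proposed repair --- passing to $\tilde S=\operatorname{span}_{\R(s)}S$ --- fails on both ends: the hypothesis $h(S)=S$ does not yield $h(\tilde S\cap M)=\tilde S\cap M$, since $h$ of an $\R(s)$-linear combination of elements of $S$ is not controlled by $h$ on $S$; and even if you obtained QI of $\tilde S$, this does \emph{not} descend to $S$, because for $K\in S$ it only gives $KGK\in\tilde S$, not $KGK\in S$ --- quadratic invariance is not inherited by subsets, contrary to your parenthetical claim. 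The paper avoids all of this by invoking the second part of Theorem~\ref{thm:qi_rings}, which is stated for $R$-modules and requires only that every residue field of $R$ have enough elements: $\R(s)_{\p}$ is a local ring whose unique maximal ideal is $\R(s)_{\sp}$, so its only residue field is $\R$, which is infinite. The hypothesis $h(S)=S$ then converts to $K\adj(I-GK)\in S$ for all $K\in S$ by multiplying $h(K)$ by the scalar $-\det(I-GK)\in\R(s)_{\p}$ and using the module property. You should route your converse through Theorem~\ref{thm:qi_rings} rather than through Theorem~\ref{thm:qi_rationals}.
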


We now extend the rational results of Theorems~\ref{thm:qi_rationals}
and~\ref{thm:qi_rationals_proper} to rational functions of multiple
variables with mixed properness constraints. This will allow
this approach to be applied to two additional classes of systems. The
first is networks of linear systems interconnected by delays, and the
second is multidimensional systems. These are discussed in Section~\ref{sec:examples}.
Specifically, define the sets of indeterminates $\mathbf{x} =
(x_1,\dots,x_\ell)$ and $\mathbf{s} = (s_1,\dots,s_k)$. We are
interested in the ring of multivariate rationals
$\R(\mathbf{x},\mathbf{s})$ where we have imposed a properness
constraint on each of the $s_i\in \mathbf{s}$. Note that this is not
the same as the rational function itself being proper. For example,
\[
g = \frac{s_1s_2s_3}{s_1^2+2s_2+s_3}
\]
is proper in each of the variables $s_1$, $s_2$, $s_3$, but is not
proper by the standard definition since the degree of the numerator is
larger than that of the denominator. We will use the subscripts p or
sp to apply individually to each of the $s_i\in\mathbf{s}$ while
ignoring the $x_i\in\mathbf{x}$. Therefore, $g \in \my_{\p}$. The
multivariate rational invariance result is given below.

\begin{thm}[QI for multivariate rationals]\label{thm:qi_mixed_rationals}
Suppose that $G \in \my^{m\times n}_{\sp}$, and $S \subseteq \my_{\p}^{n\times m}$ is an $\my_{\p}$-module.
\[
S \text{ is QI with respect to } G \quad\iff \quad h(S) = S
\]
\end{thm}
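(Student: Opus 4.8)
The plan is to reduce Theorem~\ref{thm:qi_mixed_rationals} to the single-variable case (Theorem~\ref{thm:qi_rationals_proper}), or rather to its underlying ring-theoretic content in Theorem~\ref{thm:qi_rings}, by exhibiting $\my_{\p}$ as a ring of the kind covered there. First I would identify $\my_{\p}$ as a localization: starting from the polynomial ring $\R[\mathbf{x},\mathbf{s}]$, one inverts all polynomials whose $\mathbf{s}$-degree (in the appropriate graded sense) does not exceed that of the numerator, so that $\my_{\p}$ is a commutative ring with identity, with $2_R = 2$ a unit and characteristic zero. The key structural facts to establish are: (a) $\my_{\p}$ has the property that the hypotheses of both parts of Theorem~\ref{thm:qi_rings} hold — in particular $2_R \in U(R)$ is immediate, and every residue field $\my_{\p}/\mathfrak{m}$ is infinite (since each such field contains a copy of $\R$ or at least of $\Q$, as the rational/integer constants are never in a proper maximal ideal), so the "$\min(m,n)+1$ elements" condition is satisfied for all $m,n$; and (b) $\adj(I-GK)$ and $\det(I-GK)$ behave correctly with respect to the properness constraint, so that $h(S)=S$ can be phrased purely in terms of $K\adj(I-GK)$ and a unit.

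The second ingredient is to handle invertibility: for $G$ strictly proper in each $s_i$ and $K$ proper in each $s_i$, the product $GK$ is strictly proper in each $s_i$, hence $I - GK$ has determinant of the form $1 + (\text{something strictly proper in each }s_i)$, which is a unit in $\my_{\p}$ — this is exactly the multivariate analogue of the observation, used in Theorem~\ref{thm:qi_rationals_proper}, that $I-GK$ is always invertible when $G$ is strictly proper. So $M$ (the set where $I-GK$ is invertible) is all of $\my_{\p}^{n\times m}$, which is why the statement reads $h(S)=S$ rather than $h(S\cap M)=S\cap M$. Combining this with Proposition~\ref{prop:ch}\eqref{i:adj}, we get $h(K) = -K(I-GK)^{-1} = K\adj(I-GK)\cdot\big(\det(I-GK)\big)^{-1}$, and since $S$ is an $\my_{\p}$-module and $\big(\det(I-GK)\big)^{-1}\in\my_{\p}$, membership $h(K)\in S$ is equivalent to $K\adj(I-GK)\in S$. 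Applying both parts of Theorem~\ref{thm:qi_rings} with $R = \my_{\p}$ then yields both directions of the equivalence.

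I would organize the proof as: (1) a lemma that $\my_{\p}$ is a commutative ring with $2\in U(\my_{\p})$ and all residue fields infinite; (2) a lemma that strictly-proper-in-each-$s_i$ times proper-in-each-$s_i$ is strictly-proper-in-each-$s_i$, hence $I-GK$ is invertible in $M_m(\my_{\p})$ for $G\in\my_{\sp}^{m\times n}$, $K\in\my_{\p}^{n\times m}$, so $M=\my_{\p}^{n\times m}$; (3) the reduction of $h(S)=S$ to "$K\adj(I-GK)\in S$ for all $K\in S$" using Proposition~\ref{prop:ch}; (4) invoke Theorem~\ref{thm:qi_rings}. I expect the main obstacle to be step~(1), specifically verifying that every maximal ideal of $\my_{\p}$ has infinite residue field: the ring $\my_{\p}$ is a somewhat unusual localization of a multivariate polynomial ring (a multi-graded properness condition rather than a single grading), so one must check carefully that its maximal ideals never swallow the constant field, and that the multivariate notion of "proper in each $s_i$" is closed under the ring operations and behaves well under passing to quotients. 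Once that bookkeeping is in place, the rest is a direct application of the machinery already developed for rings.
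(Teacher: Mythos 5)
Your proposal is correct and follows essentially the same route as the paper: reduce to Theorem~\ref{thm:qi_rings} by checking that $2$ is a unit in $\my_{\p}$ and that every residue field is infinite, and show that $\det(I-GK)$ is always a unit so that $I-GK$ is invertible for every $K$, whence $h(S)=S$ is equivalent to $K\adj(I-GK)\in S$ for all $K\in S$. The only (harmless) difference is that the paper writes $\my_{\p}\isomorphic\mx_{\p}$ and explicitly identifies the maximal ideals as the sets of functions strictly proper in a given $s_i$, checking both conditions against that list, whereas you verify the residue-field condition via the embedded copy of $\Q$ (nonzero constants are units, so no maximal ideal contains them) and the unit condition directly from $\det(I-GK)=1+(\text{strictly proper in each }s_i)$ --- so the bookkeeping you flag as the main obstacle in your step (1) is in fact immediate.
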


In Section~\ref{sec:examples}, we give an example of a class of
systems that can be represented by a multivariate rational function
with mixed properness constraints such as those used in
Theorem~\ref{thm:qi_mixed_rationals}.

The invariance results of Section~\ref{sec:rings_and_fields} only rely
on the algebraic properties of the objects involved, so our results
may be applied to a variety of examples beyond the ones mentioned in
this section. As a simple example, we may replace $\R$ by $\Q$ or $\C$
in Theorems~\ref{thm:qi_rationals}--\ref{thm:qi_mixed_rationals}.

\section{Examples}\label{sec:examples}

In this section, we show some examples of problems that can be modeled using
our algebraic framework. The purpose is to illustrate that the
constraint that $S$ be an $R_{\p}$-module occurs frequently and in a variety of
different situations.

\subsection{Sparse controllers}

The simplest class of systems that we can analyze are systems with rational
transfer functions subject to controllers with sparsity constraints. If every nonzero entry in the controller is required to be a proper
rational function in $\R(s)_{\p}$, it is clear that the set $S$ of admissible controllers is an $\R(s)_{\p}$-module.

\subsection{Network with delays}

Consider a distributed system where the subsystems affect one another
via delay constraints. We wish to design a decentralized controller subject to
communication delay constraints between subcontrollers.

Consider the simple example of two plants, each with their own controller. We represent the plants and their associated controllers by the transfer functions $(G_i(s),K_i(s))$. Suppose the controllers communicate with each other using a bilateral network that taxes all transmissions with a delay $d = e^{-s\tau}$. The example is illustrated in Figure~\ref{fig:networkexample} 
\begin{figure}[ht]
\centering
\begin{tikzpicture}[thick,auto,>=latex,node distance=10mm]
\def\dx{0.4}
\def\dh{0.15}
\tikzstyle{block}=[draw,rectangle,minimum height=1.8em,minimum width=2em]
\node [block](P1) at (-2,0) {$G_1$};
\node [block,below of=P1](K1){$K_1$};
\draw [<-] (P1.east) -- +(\dx,0) |- node[pos=0.25]{$u_1$} (K1);
\draw [->] (P1.west) -- +(-\dx,0) |- node[swap,pos=0.25]{$y_1$} (K1);
\node [block](P2) at (2,0) {$G_2$};
\node [block,below of=P2](K2){$K_2$};
\draw [<-] (P2.east) -- +(\dx,0) |- node[pos=0.25]{$u_2$} (K2);
\draw [->] (P2.west) -- +(-\dx,0) |- node[swap,pos=0.25]{$y_2$} (K2);
\node [block,minimum width=6cm](N) at (0,-2.2) {Network with delay $d$};
\coordinate[shift={(-\dh,0)}] (L1) at (K1.south);
\coordinate[shift={(\dh,0)}] (R1) at (K1.south);
\draw [->] (L1) -- (L1 |- N.north);
\draw [<-] (R1) -- (R1 |- N.north);
\coordinate[shift={(-\dh,0)}] (L2) at (K2.south);
\coordinate[shift={(\dh,0)}] (R2) at (K2.south);
\draw [->] (L2) -- (L2 |- N.north);
\draw [<-] (R2) -- (R2 |- N.north);
\end{tikzpicture}
\caption{Simple example consisting of two plants with networked controllers.\label{fig:networkexample}}
\end{figure}
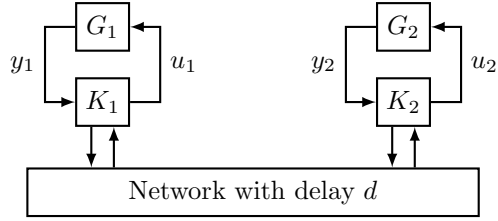
For simplicity, suppose each controller transmits all the measurements it receives to the other controller. Then the global plant and controller are characterized by the maps
\[
\bmat{y_1 \\ y_2} = \bmat{ G_1 & 0 \\ 0 & G_2}\! \bmat{u_1 \\ u_2}
\;\text{and}\;
\bmat{u_1 \\ u_2} = \bmat{K_{11} & K_{12}d \\ K_{21}d & K_{22}}\!\bmat{y_1 \\ y_2}
\]
Such an architecture is QI, and more detailed topologies of this type were studied in~\cite{rotkowitz06,rotkowitz10}. In general, the algebraic framework allows us to treat scenarios where the plant $G$ and controller $K$ are rational functions in $s$ and $d$. The constraint $K \in \R(s,d)_{\p}$, where
properness is enforced on $s$ and $d$ independently, naturally
guarantees that negative delays are forbidden, thus enforcing
causality.

Define the \emph{delay} of a transfer function as the difference between the
degree of $d$ in its denominator and numerator. For example,
\[
 \delay\left( \frac{1}{sd+2} \right) = 1
 \quad\text{and}\quad
 \delay\left( \frac{s+d^2}{s^2d+d^5}\right) = 3
\]
As a convention, $\delay(0) = \infty$. We can impose 
delay constraints on the controller using a set of the form
\[
 S = \set{K \in \R(s,d)_{\p}}{\delay(K_{ij}) \geq a_{ij}}
\]
where $a_{ij} \geq 0$ is the minimum delay (in multiples of $d$) between subcontrollers $i$
and $j$.  One can verify that $S$ is an $\R(s,d)_{\p}$-module, and so
we may apply Theorem~\ref{thm:qi_mixed_rationals} to deduce an
invariance condition. Similar results proved using very different
methods can be found in \cite{rotkowitz10}.

\subsection{Multidimensional systems}
\label{subsec:multid}

In many cases, we are interested in modeling and control of
systems whose states, inputs or outputs may be functions of a spatial
independent variable, such as in control of continuum mechanical
models such as fluids or elastic solids. While the dynamics may be
readily described by linear operator models, analysis of such models
has typically been performed using the tools of $C_0$ semigroup
theory~\cite{bamieh}.  Taking Fourier transforms spatially and Laplace
transforms temporally, one arrives at an algebraic formulation of such
systems~\cite{beck_2001,roesser}, where a linear system is described
by a rational function of two or more independent frequency variables.
The properness requirement that arises from causality is then only
required with respect to the temporal frequency variable.
This notion of multivariate transfer functions is used to represent
spatiotemporal dynamics in a variety of important papers
\cite{roesser, fornasini, beckdoyleglover, dulleruddandrea}.
Using this framework, we are able to circumvent the analytical
requirements and still explicitly construct the set of closed-loop
maps for such systems.  This class of systems is not addressed by
existing results on quadratic invariance, and in particular it is not
covered by the results of~\cite{rotkowitz06}. Our approach 
allows one to simply describe the set of achievable closed-loop maps
for such systems in both the centralized and decentralized cases.
In general, our framework allows us to consider transfer functions in
two sets of variables $R = \R(x_1,\dots,x_\ell,s_1,\dots,s_k)$ where
we impose properness on the $s_i$ but not on the $x_i$.

It is however worth noting that synthesis of the optimal controller
 for such multidimensional systems is a challenging problem, even in
the centralized case.  Therefore one cannot simply combine the results
in this paper with existing exact synthesis formulae or tools 
from the centralized case, and synthesize decentralized multidimensional controllers.

\section{Proofs of main results}\label{sec:proofs}

\begin{pproof}{Proposition~\ref{prop:ch}}
Apply the identity~\eqref{eq:fundamental_adjoint_property} to $A-xI$, which we view as an element of $M_n(R[x])$, and obtain
\begin{equation}\label{eq:chpf1}
 (A-xI)\adj(A-xI) = (p_0 + p_1 x + \dots + p_n x^n) I
\end{equation}
The adjugate is defined in terms of minors, so each entry of $\adj(A-xI)$ is an element of $R[x]$ of degree at most $n-1$. Because of the ring isomorphism $M_n(R[x]) \isomorphic (M_n(R))[x]$ (see~\cite[\S III.C]{curtis} for a proof), we may write $\adj(A-xI) = B_0 + B_1 x + \dots + B_{n-1} x^{n-1}$ for some $B_i\in M_n(R)$. Substituting into~\eqref{eq:chpf1} and viewing the result as as a polynomial identity in $(M_n(R))[x]$, we obtain
\begin{equation}\label{eq:chpf2}
(A - Ix) (B_0 + B_1 x + \dots + B_{n-1} x^{n-1}) 
	= p_0 I + p_1 I x + \dots + p_n I x^n
\end{equation}
Expanding~\eqref{eq:chpf2} and comparing coefficients, we obtain
\begin{align*}
AB_0 &= p_0 I \\
AB_k - B_{k-1} &= p_k I\qquad \text{for }k=1,\dots,n-1 \\
-B_{n-1} &= p_n I
\end{align*}
Left-multiplying each $p_k$ equation by $A^k$ and summing all equations, we obtain~\eqref{i:ch}. Similarly, left-multiplying each $p_k$ equation by $A^{k-1}$ for $k\ge 1$ and summing, we obtain
\[
p_1 + p_2 A + \dots + p_n A^{n-1} = -B_0 = -\adj(A)
\]
which is the statement of~\eqref{i:adj}.
\end{pproof}

\subsection{Invariance for rings}

In the following subsection, $R$ is an arbitrary commutative ring with identity. We begin with a lemma that allows us conclude that if $K\in S$ and $KGK\in S$, then we have $K(GK)^i\in S$ for $i=1,2,3,\dots$. 

\begin{lem}\label{lem:kgk}
Suppose $G\in R^{m\times n}$ and $S\subseteq R^{n\times m}$ is an
$R$-module. Further suppose that $2_R \in U(R)$. If $S$ is
quadratically invariant with respect to $G$, then for all $K\in S$:
\[
K(GK)^i \in S\qquad\text{for $i=1,2,\dots$}
\]
\end{lem}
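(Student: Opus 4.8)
The plan is to prove the claim by induction on $i$, using the polarization-type identity already exhibited in the proof of Theorem~\ref{thm:qi_matrices}. The base case $i=1$ is the hypothesis that $S$ is quadratically invariant, i.e. $KGK \in S$ for all $K\in S$. For the inductive step, suppose $K(GK)^j \in S$ for all $K\in S$ and all $1 \le j \le i$; I want to conclude $K(GK)^{i+1}\in S$. The key algebraic fact is the identity
\[
K(GK)^{i+1} = \tfrac12\Bigl( \bl K + K(GK)^i\br G \bl K + K(GK)^i\br - KGK - \bl K(GK)^i\br G\bl K(GK)^i \br \Bigr),
\]
which one verifies by expanding the first product into $KGK + KGK(GK)^i + K(GK)^iGK + K(GK)^iG K(GK)^i$ and noting that the two cross terms $KGK(GK)^i$ and $K(GK)^iGK$ are both equal to $K(GK)^{i+1}$, while the remaining terms cancel against the subtracted ones. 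This is where the hypothesis $2_R\in U(R)$ enters: we need to divide by $2_R$, and since $S$ is an $R$-module we may multiply elements of $S$ by the ring element $(2_R)^{-1} \in R$.

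Next I would check that each of the three terms on the right-hand side lies in $S$, so that their $R$-linear combination (with coefficients $\tfrac12$, $-\tfrac12$, $-\tfrac12$ in $R$) lies in $S$ because $S$ is an $R$-module. For the middle term $KGK$: since $K\in S$ and $S$ is QI, $KGK\in S$. For the last term, set $L = K(GK)^i$; by the inductive hypothesis $L \in S$, and then $LGL \in S$ by quadratic invariance applied to $L$. For the first term, set $\t{K} = K + K(GK)^i = K + L$; this is a sum of two elements of $S$, hence $\t{K}\in S$ since $S$ is an $R$-module, and then $\t{K}G\t{K} \in S$ again by quadratic invariance applied to $\t{K}$. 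Combining, the right-hand side is $(2_R)^{-1}\bl \t{K}G\t{K} - KGK - LGL\br$, an $R$-linear combination of elements of $S$, hence in $S$. This closes the induction.

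I do not anticipate a serious obstacle here; the proof is essentially a bookkeeping exercise around the polarization identity. The only point requiring care is making sure that quadratic invariance is being invoked with the correct (sometimes composite) matrix as its argument — in particular that the definition $KGK\in S$ for all $K\in S$ is applied not just to the original $K$ but to $\t K = K + K(GK)^i$ and to $L = K(GK)^i$ — and that the scalar $\tfrac12$ is legitimately a ring element, which is exactly the role of the hypothesis $2_R\in U(R)$. One might also remark, though it is not needed for the statement, that this is precisely the hypothesis whose failure is illustrated by the $\Z$ counterexample given after Theorem~\ref{thm:qi_rings}.
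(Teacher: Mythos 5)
Your proof is correct and follows exactly the same route as the paper's: induction on $i$ via the polarization identity, with quadratic invariance applied to $K$, to $K(GK)^i$, and to their sum, and the $R$-module property together with $2_R\in U(R)$ used to form the final linear combination. The paper states this in a single sentence; you have simply filled in the bookkeeping.
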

\begin{proof}
The result follows by induction, using the identity:
\begin{equation*}
K(GK)^{i+1} = 2_R^{-1} \Bigl( \bigl(K+K(GK)^i\bigr)G\bigl(K+K(GK)^i\bigr) - KGK -\bigl( K(GK)^i\bigr)G\bigl(K(GK)^i\bigr) \Bigr)
\end{equation*}
where $2_R^{-1}$ is the multiplicative inverse of $2_R = 1_R+1_R$, which exists by assumption.
\end{proof}

Note that Lemma~\ref{lem:kgk} requires that $2_R\in U(R)$. As shown in the first example of Section~\ref{sec:rings_and_fields}, this requirement is necessary. If we strengthen the notion of quadratic invariance to instead require that $K_1 G K_2 \in S$ for all $K_1,K_2\in S$, then the conclusion of Lemma~\ref{lem:kgk} is trivial and the assumption $2_R\in U(R)$ is no longer required.

We will require an important property of polynomials. Specifically, we will need conditions under which a polynomial is uniquely specified by the values it takes on at finitely many points. For a real polynomial $f\in \R[x]$, the result is well-known. If we have $f(x)=0$ for all $x\in \R$, then $f=0$. In other words, all coefficients of $f$ are zeros and thus $f$ is the zero polynomial. The same result does not hold when we replace $\R$ by a general field $\F$ or by a commutative ring $R$. As a simple example, consider $f = x+x^2 \in (\Z/2\Z)[x]$, a polynomial with coefficients in the integers modulo~2. Then clearly $f(0)=f(1)=0$, yet $f\ne 0$.

Polynomials are closely related to \emph{Vandermonde matrices}, which we now define. The $N\times n$ Vandermonde matrix generated by $r_1,\dots,r_N \in R$ is defined as
\begin{equation}\label{eq:vandermonde}
V = \bmat{ 1 & r_1 & \dots & r_1^{n-1} \\
	   1 & r_2 & \dots & r_2^{n-1} \\
	   \vdots & \vdots & \ddots & \vdots \\
	   1 & r_N & \dots & r_N^{n-1} } \in R^{N\times n}
\end{equation}
So if $f(r) = a_0 + a_1 r + \dots + a_{n-1} r^{n-1}$, then the Vandermonde matrix~\eqref{eq:vandermonde} relates the values of $f$ evaluated at $r_1,\dots,r_N$ to the polynomial coefficients via
\[
\bmat{ f(r_1) \\ \vdots \\ f(r_N) }
= V \bmat{a_0 \\ \vdots \\ a_{n-1}}
\]
If there exists a left-invertible Vandermonde matrix $V$, then the coefficients $a_i$ are uniquely determined by the values $f(r_j)$. If $R$ is a field, we may assume without loss of generality that the Vandermonde matrix is square. The answer is given by the following proposition.

\begin{prop}\label{prop:inv_vandermonde_field}
Suppose $\F$ is a field. The following statements are equivalent
\begin{enumerate}[(i)]
\item The field $\F$ contains at least $n$ distinct elements.
\item\label{iii2} There exists an invertible $n\times n$ Vandermonde matrix.
\item\label{iii3} There exists a left-invertible $N\times n$ Vandermonde matrix for some $N\ge n$.
\end{enumerate}
\end{prop}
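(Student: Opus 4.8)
The plan is to prove the cyclic chain of implications $(i)\implies(\ref{iii2})\implies(\ref{iii3})\implies(i)$, which is the cleanest way to establish the three-way equivalence.

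\medskip

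\noindent\textbf{$(i)\implies(\ref{iii2})$.} Suppose $\F$ has at least $n$ distinct elements $r_1,\dots,r_n$. Form the square Vandermonde matrix $V$ generated by these elements. The classical Vandermonde determinant identity $\det V = \prod_{i<j}(r_j - r_i)$ holds over any commutative ring, and in particular over $\F$; this can be proved directly by the standard column-reduction argument (or by induction on $n$, expanding along the last row and factoring), none of which requires field axioms beyond commutativity. Since the $r_i$ are distinct and $\F$ is a field (hence an integral domain), each factor $r_j - r_i$ is nonzero, so $\det V \ne 0$, i.e.\ $\det V \in U(\F)$. By the invertibility criterion for matrices over a commutative ring stated in Section~\ref{sec:prelim}, $V$ is invertible.

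\medskip

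\noindent\textbf{$(\ref{iii2})\implies(\ref{iii3})$.} Immediate: an invertible $n\times n$ Vandermonde matrix is in particular a left-invertible $N\times n$ Vandermonde matrix with $N=n\ge n$.

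\medskip

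\noindent\textbf{$(\ref{iii3})\implies(i)$.} This is the contrapositive step and the one requiring a little care. Suppose $\F$ has fewer than $n$ distinct elements, say $|\F| = q < n$ (a field with fewer than $n$ elements is necessarily finite). Let $V \in \F^{N\times n}$ be any Vandermonde matrix generated by $r_1,\dots,r_N\in\F$; I claim $V$ cannot be left-invertible. The key observation is that the polynomial $g(x) = \prod_{c\in\F}(x - c) \in \F[x]$ has degree $q$ and vanishes at every element of $\F$, so each $r_j$ is a root: $g(r_j) = 0$ for all $j$. Writing $g(x) = g_0 + g_1 x + \dots + g_{q-1}x^{q-1} + x^{q}$ and padding with zero coefficients up to degree $n-1$, the coefficient vector $w = (g_0,\dots,g_{q-1},1,0,\dots,0)^\tp \in \F^n$ is nonzero (its $(q{+}1)$-st entry is $1$, and $q+1 \le n$ so this entry exists), yet $Vw = (g(r_1),\dots,g(r_N))^\tp = 0$. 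Thus $V$ has a nontrivial kernel and cannot be left-invertible. This contradicts $(\ref{iii3})$, completing the cycle.

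\medskip

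\noindent I expect the main obstacle to be purely expository rather than mathematical: making sure the Vandermonde determinant formula is justified at the level of generality needed (commutative ring suffices, though here we only need a field), and being careful in the last step that the ``witness'' coefficient vector $w$ is genuinely nonzero and has the right length $n$ — this is where the hypothesis $q < n$ is used, since we need the leading term $x^q$ of $g$ to land in a coordinate slot that actually exists among the $n$ columns. Everything else is routine linear algebra over a field.
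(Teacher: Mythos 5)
Your proof is correct. The paper's own ``proof'' is a single sentence---the result is said to follow from the determinant formula $\det V = \prod_{i<j}(r_i-r_j)$---so your (i)$\implies$(ii) step is exactly the intended argument, and (ii)$\implies$(iii) is trivial in any treatment. Where you genuinely diverge is in closing the cycle. The route the paper implicitly has in mind (cf.\ its remark that over a field one ``may assume without loss of generality that the Vandermonde matrix is square'') is rank-based: a left-invertible $N\times n$ matrix over a field has column rank $n$, hence $n$ linearly independent rows; rows generated by equal $r_j$'s are identical, so there must be $n$ distinct generators, which gives (iii)$\implies$(i), and extracting those rows gives (iii)$\implies$(ii) since any row-submatrix of a Vandermonde matrix is again Vandermonde. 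You instead prove (iii)$\implies$(i) by exhibiting an explicit nonzero kernel vector, namely the coefficient vector of the universal vanishing polynomial $\prod_{c\in\F}(x-c)$ of degree $q=|\F|<n$, which rules out left-invertibility directly. Both arguments are sound; yours is constructive and pinpoints exactly where finiteness of $\F$ is used (the degree-$q$ polynomial must fit into the $n$ coefficient slots, which is precisely the condition $q\le n-1$), whereas the rank argument is shorter and makes transparent why the field case collapses (iii) down to (ii)---the very implication that fails for general rings, as the paper's $\Z[\beta]$ example shows. Your expository worries are well placed but fully resolved in your write-up.
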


The proof follows from the following well-known formula for the determinant of a square Vandermonde matrix.
\[
\det V = \prod_{1 \le i < j \le n} (r_i-r_j)
\]
The result is  more complicated if~$R$ is a commutative ring with identity. In reference to Proposition~\ref{prop:inv_vandermonde_field}, there are cases where \eqref{iii3}$\nRightarrow$\eqref{iii2}. As an example, consider the ring $\Z[\beta]$ where $\beta = \tfrac12(1+\sqrt{-11})$.\label{exref} It is easy to check that the only units of $\Z[\beta]$ are $\pm 1$. There are no $3\times 3$ invertible Vandermonde matrices in this ring. To see why, note that if $V$ is generated by $x,y,z$ then $\det V = -(x-y)(y-z)(z-x)$, which is a unit if and only if each factor is a unit. Adding the factors together yields $\pm 1\pm 1\pm 1 = 0$, a contradiction. However, there exists a $4\times 3$ left-invertible Vandermonde matrix in this ring. For example,
\[
\bmat{ 1 & 1 & 1 & 1 \\
       0 & 1 & 2 & \beta \\
       0 & 1 & 4 & \beta^2 }
\bmat{ 1 & 1-10\beta & 1+\beta \\
       0 & 17+15\beta & -4-\beta \\
       0 & -11-3\beta & 2 \\
       0 & -7-2\beta & 1 }
= \bmat{1 & 0 & 0 \\ 0 & 1 & 0 \\ 0 & 0 & 1}
\]
The following lemma gives a complete characterization of left-invertibility for Vandermonde matrices in a general commutative ring with identity.
\begin{lem}\label{lem:inv_vandermonde}
Suppose $R$ is a commutative ring with identity. The following statements are equivalent.
\begin{enumerate}[(i)]
\item\label{itx:maxres}
Every residue field of $R$ has at least $n$ elements.
\item\label{itx:unideal}
The ideal generated by the determinants of all $n\times n$ Vandermonde matrices is equal to $R$, the unit ideal.
\item\label{itx:leftinv}
For some $N \ge n$, there exists a left-invertible $N\times n$ Vandermonde matrix.
\end{enumerate}
\end{lem}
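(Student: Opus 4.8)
The plan is to establish the cycle of implications \eqref{itx:maxres}$\implies$\eqref{itx:unideal}$\implies$\eqref{itx:leftinv}$\implies$\eqref{itx:maxres}. The middle and last implications are the soft ones and I would dispatch them first. For \eqref{itx:unideal}$\implies$\eqref{itx:leftinv}: if the ideal generated by all $n\times n$ Vandermonde determinants is the unit ideal, then $1_R = \sum_{k=1}^{r} c_k \det V_k$ for finitely many Vandermonde matrices $V_1,\dots,V_r$ (each generated by some tuple of elements of $R$) and coefficients $c_k\in R$. Stack the $V_k$ vertically into a single $N\times n$ Vandermonde-type matrix $V$ with $N = \sum \deg$; strictly, to keep it a genuine Vandermonde matrix in the sense of~\eqref{eq:vandermonde} one should stack rows generated by a single list $r_1,\dots,r_N$ containing the union of all the generating tuples, and observe that every $n\times n$ minor formed from $n$ of these rows is itself an $n\times n$ Vandermonde determinant. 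Since the full list of $n\times n$ minors of $V$ generates the unit ideal (it contains the $\det V_k$), the Cauchy–Binet / adjugate argument gives a left inverse: there is an $n\times N$ matrix $W$ with $WV = I_R$, built from cofactor expansions weighted by the $c_k$. This is the standard fact that a matrix over a commutative ring is left-invertible iff its maximal minors generate the unit ideal, applied to the submatrix of rows.

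For \eqref{itx:leftinv}$\implies$\eqref{itx:maxres}: suppose some residue field $\F = R/\mathfrak m$ has fewer than $n$ elements, and let $\pi:R\to\F$ be the quotient map, extended entrywise to matrices. If $V\in R^{N\times n}$ is a left-invertible Vandermonde matrix, then $\pi(V)$ is an $N\times n$ Vandermonde matrix over $\F$ and it is still left-invertible (apply $\pi$ to $WV = I_R$). But by Proposition~\ref{prop:inv_vandermonde_field}, condition~\eqref{iii3} fails over $\F$ when $\F$ has fewer than $n$ elements, a contradiction. Hence every residue field has at least $n$ elements.

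The remaining implication \eqref{itx:maxres}$\implies$\eqref{itx:unideal} is the heart of the lemma and the step I expect to be the main obstacle. Let $I$ denote the ideal generated by all $n\times n$ Vandermonde determinants; the goal is $I = R$. Argue by contraposition: if $I \ne R$, then $I$ is contained in some maximal ideal $\mathfrak m$, and I claim the residue field $\F = R/\mathfrak m$ has fewer than $n$ elements. Indeed, for any $r_1,\dots,r_n\in R$ we have $\det V(r_1,\dots,r_n) = \prod_{i<j}(r_i - r_j) \in I \subseteq \mathfrak m$, so in $\F$ the images $\bar r_1,\dots,\bar r_n$ always have $\prod_{i<j}(\bar r_i - \bar r_j) = 0$, i.e. some two of them coincide; since $\pi:R\to\F$ is surjective this says that \emph{any} $n$ elements of $\F$ have a repeated one, which forces $|\F| < n$, contradicting~\eqref{itx:maxres}. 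The one subtlety to handle carefully is the surjectivity step — one needs that every $n$-tuple over $\F$ lifts to an $n$-tuple over $R$, which is immediate from surjectivity of $\pi$ — and the pigeonhole conclusion $|\F| \le n-1$. With this, $I$ lies in no maximal ideal only if $I = R$, completing the cycle and hence the proof.
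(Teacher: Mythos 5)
Your proof is correct and follows essentially the same route as the paper: the implication \eqref{itx:maxres}$\iff$\eqref{itx:unideal} via a maximal ideal containing the Vandermonde-determinant ideal and the product formula $\det V = \prod_{i<j}(r_i-r_j)$ in the residue field, and the equivalence with \eqref{itx:leftinv} via Cauchy--Binet together with the adjugate construction of a left inverse from a unit combination of maximal minors. The only differences are organizational --- you run a cycle of implications and prove \eqref{itx:leftinv}$\implies$\eqref{itx:maxres} by reducing modulo a maximal ideal and invoking Proposition~\ref{prop:inv_vandermonde_field}, and you spell out the explicit left-inverse construction that the paper delegates to~\cite{robinson05} --- neither of which changes the substance.
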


\begin{proof}
We begin by showing \eqref{itx:maxres}$\iff$\eqref{itx:unideal}.
If \eqref{itx:maxres} is false, there exists some maximal ideal $M \ideal R$ such that the quotient ring $R/M$ contains fewer than $n$ elements. Therefore, given any set of elements $r_1,\dots, r_n \in R$, we must have $r_i-r_j \equiv 0 \pmod M$ for some $i,j$. It follows that if $V$ is the Vandermonde matrix generated by $r_1,\dots,r_n$ then it satisfies
\[
\det V = \prod_{1\le i < j\le n}(r_i-r_j) \equiv 0 \pmod M
\]
Therefore $\det V\in M$ for every Vandermonde matrix. So the ideal generated by all $n\times n$ Vandermonde determinants is contained in $M$, a proper ideal, and so cannot be the unit ideal. This shows that \eqref{itx:unideal} is false. Conversely, if \eqref{itx:unideal} is false then the ideal generated by all $n\times n$ Vandermonde determinants must be proper, and so is contained in some maximal ideal $M \ideal R$. In particular, $\det V \equiv 0 \pmod M$ for every $n\times n$ Vandermonde matrix $V$. Therefore, $\det V = 0$ for all $n\times n$ Vandermonde matrices $V$ with entries in $R/M$. Because $R/M$ is a field, every nonzero element is a unit. So $\det V$ is a unit of $R/M$ if and only if $V$ is generated by distinct elements. We conclude that $R/M$ must have fewer than $n$ distinct elements and so \eqref{itx:maxres} is false.

The result \eqref{itx:unideal}$\implies$\eqref{itx:leftinv} follows from the Cauchy-Binet formula, which gives the following expansion for $\det(LV)$ where $L$ and $V$ are not necessarily square but have a square product.
\[
\det(LV) = \sum_{\substack{s\subseteq\{1,\dots,N\}\\|s|=n}} \det(L_{:,s})\det(V_{s,:})
\]
Here, the sum is taken over all subsets of $\{1,\dots,N\}$ with $n$ elements. The corresponding columns and rows are extracted from $L$ and $V$ respectively and the associated determinants are multiplied together. Since each $V_{s,:}$ is an $n\times n$ Vandermonde determinant, if $L$ is a left-inverse for $V$, then $\det(LV) = \det(I) = 1$, and~\eqref{itx:unideal} follows. For the converse result~\eqref{itx:unideal}$\impliedby$\eqref{itx:leftinv}, a left-inverse can be explicitly constructed; see~\cite{robinson05} for a proof.
\end{proof}

\begin{lem}\label{lem:Helem}
Suppose $R$ is a commutative ring with identity, and $H$ is an $R$-module generated by $\{H_1,\dots,H_n\}$. Consider the following statements.
\begin{enumerate}[(i)]
\item\label{it:maxres}
Every residue field of $R$ has at least $n$ elements.
\item\label{it:Hgen}
Suppose $H$ is the $R$-module generated by some set $\{H_1,\dots,H_{n}\}$. Then $H$ is also generated by the set
$
\set{ H_1+rH_2 + \dots + r^{n-1} H_{n} }{ r\in R }
$
\end{enumerate}
Then \eqref{it:maxres}$\implies$\eqref{it:Hgen}. If the $H_i$ are also a basis for $H$, then \eqref{it:maxres}$\iff$\eqref{it:Hgen}.
\end{lem}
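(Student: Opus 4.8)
The plan is to reduce the generation statement~\eqref{it:Hgen} to a Vandermonde left-invertibility condition, and then invoke Lemma~\ref{lem:inv_vandermonde} to connect it with the residue-field hypothesis~\eqref{it:maxres}. First I would fix notation: let $H$ be generated by $\{H_1,\dots,H_n\}$, write $\widetilde{H}$ for the $R$-module generated by $\set{H_1 + rH_2 + \dots + r^{n-1}H_n}{r\in R}$, and observe immediately that $\widetilde{H}\subseteq H$ since each generator of $\widetilde{H}$ is an $R$-linear combination of the $H_i$. So the content of~\eqref{it:Hgen} is the reverse inclusion $H\subseteq\widetilde{H}$, equivalently that each $H_i$ lies in $\widetilde{H}$.

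For the implication~\eqref{it:maxres}$\implies$\eqref{it:Hgen}: by Lemma~\ref{lem:inv_vandermonde}, hypothesis~\eqref{it:maxres} gives us some $N\ge n$ and elements $r_1,\dots,r_N\in R$ whose $N\times n$ Vandermonde matrix $V$ has a left inverse $L\in R^{n\times N}$, so $LV = I_n$. Let $g_j \defeq H_1 + r_j H_2 + \dots + r_j^{n-1}H_n \in \widetilde{H}$ for $j=1,\dots,N$. Stacking, the vector $(g_1,\dots,g_N)^{\tp}$ of module elements equals $V$ times the vector $(H_1,\dots,H_n)^{\tp}$, where the matrix acts on the module entrywise. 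Multiplying on the left by $L$ gives $(H_1,\dots,H_n)^{\tp} = L\,(g_1,\dots,g_N)^{\tp}$, which exhibits each $H_i$ as an $R$-linear combination of the $g_j\in\widetilde{H}$. Hence $H\subseteq\widetilde{H}$, proving~\eqref{it:Hgen}. (One minor point to state carefully: matrices over $R$ act on tuples of module elements, and the associativity $L(V\cdot \mathbf{H}) = (LV)\cdot\mathbf{H}$ holds because $R$ is commutative and the module action is bilinear — this is the analogue of the scalar-matrix manipulation but needs a one-line justification.)

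For the converse under the extra hypothesis that $\{H_1,\dots,H_n\}$ is a basis: suppose~\eqref{it:maxres} fails. Then by Lemma~\ref{lem:inv_vandermonde} there is a maximal ideal $M\ideal R$ with $|R/M| < n$, and moreover \emph{every} $n\times n$ Vandermonde determinant lies in $M$; in fact, every $N\times n$ Vandermonde matrix over $R$ fails to be left-invertible. I would argue that this forces~\eqref{it:Hgen} to fail. Concretely, work modulo $M$: reduce everything to the vector space $(R/M)\otimes_R H$, which has dimension $n$ over the field $R/M$ because the $H_i$ are a basis. Each generator $H_1 + rH_2 + \dots + r^{n-1}H_n$ reduces to a vector whose coordinate tuple is a Vandermonde row $(1, \bar r, \dots, \bar r^{\,n-1})$. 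Since $R/M$ has fewer than $n$ elements, as $\bar r$ ranges over $R/M$ these rows span a subspace of dimension at most $|R/M| < n$ — so the reductions of the claimed generators cannot span the $n$-dimensional space $(R/M)\otimes_R H$. Therefore they cannot generate $H$ as an $R$-module (if they did, their reductions would span after tensoring), so~\eqref{it:Hgen} is false. This gives~\eqref{it:Hgen}$\implies$\eqref{it:maxres}, completing the equivalence.

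The main obstacle I anticipate is the converse direction, specifically making precise the claim that ``the span of Vandermonde rows $(1,\bar r,\dots,\bar r^{\,n-1})$ over $r\in R/M$ has dimension $<n$ when $|R/M|<n$.'' This is really just Proposition~\ref{prop:inv_vandermonde_field} applied to the field $R/M$ — a field with $k<n$ elements admits no left-invertible $N\times n$ Vandermonde matrix, so the rows span at most a $k$-dimensional space — but one must be careful that distinct $r$ can give the \emph{same} row (they don't, since the first two coordinates already determine $\bar r$) and that we only ever get at most $|R/M|$ distinct rows. The reduction-modulo-$M$ bookkeeping (that $(R/M)\otimes_R H$ has dimension $n$, and that generation descends to spanning) is routine but should be stated cleanly to keep the argument honest.
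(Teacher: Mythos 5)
Your proof is correct. The forward direction \eqref{it:maxres}$\implies$\eqref{it:Hgen} is exactly the paper's argument: obtain a left-invertible $N\times n$ Vandermonde matrix $V$ with left inverse $L$ from Lemma~\ref{lem:inv_vandermonde} and write $H_i = \sum_{j=1}^N L_{ij}\bigl(H_1 + r_jH_2+\dots+r_j^{n-1}H_n\bigr)$. Your converse, however, takes a genuinely different route. The paper argues directly: if \eqref{it:Hgen} holds, each $H_i$ is a finite $R$-combination of the new generators, which yields an identity of the form $H_i = \sum_j L_{ij}(H_1 + r_jH_2+\dots+r_j^{n-1}H_n)$; the basis property forces the coefficients to satisfy $LV = I$, exhibiting a left-invertible Vandermonde matrix, and then Lemma~\ref{lem:inv_vandermonde}\eqref{itx:leftinv}$\implies$\eqref{itx:maxres} finishes. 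You instead prove the contrapositive: if some residue field $R/M$ has fewer than $n$ elements, reduce modulo $M$ (tensor with $R/M$), use freeness to see that $(R/M)\otimes_R H$ is $n$-dimensional, and observe that the images of the proposed generators are at most $|R/M|<n$ distinct Vandermonde rows and hence cannot span. Both are valid; the paper's version is shorter and stays entirely inside $R$ (no tensor products or right-exactness needed), while yours is more conceptual in that it pinpoints the obstruction as a dimension count over the small residue field, essentially re-deriving the relevant direction of Lemma~\ref{lem:inv_vandermonde} rather than citing its \eqref{itx:leftinv} clause. The routine facts you flag (generation descends to spanning under reduction mod $M$; a free rank-$n$ module reduces to an $n$-dimensional vector space) are indeed all that is needed to make your version airtight.
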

\begin{proof}
To prove~\eqref{it:Hgen}, it suffices to show that each $H_i$ is a linear combination of terms of the form $H_1+rH_1 + \dots + r^{n-1} H_{n}$. If~\eqref{it:maxres} holds, then by Lemma~\ref{lem:inv_vandermonde} there is some $N\times n$ left-invertible Vandermonde matrix $V\in R^{N\times n}$. Suppose $V$ is generated by $r_1,\dots,r_N$, and suppose $L\in R^{n\times N}$ is a left-inverse of $V$. Then it is straightforward to check that
\begin{equation}\label{eq:Hieqn}
H_i = \sum_{j=1}^N L_{ij} \bigl( H_1 + r_j H_2 + \dots + r_j^{n-1} H_{n} \bigr)
\qquad
\text{for }i=1,\dots,n
\end{equation}
as required. For the converse, suppose~\eqref{it:Hgen} holds. Then an equation of the form~\eqref{eq:Hieqn} must hold for some $L\in R^{n\times N}$ and $r_1,\dots, r_N\in R$. If the $H_i$ form a basis for $H$, then the coefficients corresponding to each $H_i$ in~\eqref{eq:Hieqn} must vanish. Therefore,
\[
\sum_{j=1}^N L_{ij} r_j^{k-1} = \begin{cases}
1 & i=k\\
0 & i\ne k
\end{cases}
\qquad\text{for }i,k=1,\dots,n
\]
In other words, $LV=I$, where $V$ is the (left-invertible) Vandermonde matrix generated by $r_1,\dots,r_N$. 
\end{proof}

Lemma~\ref{lem:Helem} may be specialized to polynomials and thus yields a sufficient condition under which a $f(r)=0$ for all $r\in R$ implies that $f$ is the zero polynomial.

\begin{cor}\label{cor:poly_zero}
Suppose $R$ is a commutative ring with identity and every residue field of $R$ has at least $n$ elements. Suppose $f \in R[x]$ and $\deg f \le n-1$. If $f(r)=0$ for all $r\in R$, then $f = 0$.
\end{cor}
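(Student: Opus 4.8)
The plan is to obtain this as a direct specialization of Lemma~\ref{lem:Helem}, taking the module to be the free module $H = R^n$ with $H_i = e_i$, the $i$-th standard basis vector. Since the $\{e_i\}$ are a basis of $H$ consisting of $n$ elements, and the hypothesis is precisely part~\eqref{it:maxres} of Lemma~\ref{lem:Helem}, part~\eqref{it:Hgen} tells us that $R^n$ is generated as an $R$-module by the family of vectors $v_r \defeq e_1 + r e_2 + \dots + r^{n-1} e_n = (1, r, \dots, r^{n-1})^\tp$, indexed by $r \in R$.

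Next I would translate the polynomial statement into a statement about this generating family. Because $\deg f \le n-1$, we may write $f = a_0 + a_1 x + \dots + a_{n-1} x^{n-1}$ and collect its coefficients into $a = (a_0,\dots,a_{n-1})^\tp \in R^n$. The elementary identity here is $f(r) = a^\tp v_r$ for every $r \in R$. Thus the hypothesis ``$f(r) = 0$ for all $r$'' says exactly that the $R$-linear functional $w \mapsto a^\tp w$ on $R^n$ annihilates every generator $v_r$ of $R^n$; by $R$-bilinearity of $(a,w)\mapsto a^\tp w$ it then annihilates all of $R^n$, so in particular $a^\tp e_i = a_{i-1} = 0$ for $i = 1,\dots,n$. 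Hence $a = 0$, i.e. $f = 0$.

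I do not expect a genuine obstacle in this argument; it is essentially bookkeeping. The only points that warrant a line of care are: confirming that $\{e_i\}_{i=1}^n$ is a basis so that Lemma~\ref{lem:Helem} applies with exactly $n$ generators; verifying the pairing identity $f(r) = a^\tp v_r$ against the indexing used in part~\eqref{it:Hgen}; and noting that ``vanishes on a generating set'' upgrades to ``vanishes on the module'' purely by $R$-linearity. As a sanity check one could instead argue straight from Lemma~\ref{lem:inv_vandermonde}: a left-invertible $N\times n$ Vandermonde matrix $V$ generated by $r_1,\dots,r_N$ satisfies $Va = (f(r_1),\dots,f(r_N))^\tp = 0$, and left-multiplying by a left-inverse of $V$ forces $a = 0$; this is the same computation with the module-theoretic wrapper removed.
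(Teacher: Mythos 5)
Your proof is correct and rests on the same key fact as the paper's, namely the implication \eqref{it:maxres}$\implies$\eqref{it:Hgen} of Lemma~\ref{lem:Helem}; the only difference is that you instantiate the lemma with the free module $R^n$ and then pair against the coefficient vector $a$, whereas the paper applies it directly to the submodule $H\subseteq R$ generated by $\{a_0,\dots,a_{n-1}\}$, concluding that $H$ is generated by $\set{f(r)}{r\in R}=\{0\}$ and hence is zero. Both routes reduce to the same Vandermonde left-invertibility computation you note at the end, so this is essentially the paper's argument.
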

\begin{proof}
Suppose $f(x) = a_0 + a_1 x + \dots + a_{n-1}x^{n-1}$. Applying Lemma~\ref{lem:Helem} to the module $H\subseteq R$ generated by $\{a_0,\dots,a_{n-1}\}$, we conclude that $H$ is generated by $\set{f(r)}{r\in R}$. But $f(r)=0$ for all $r\in R$, therefore $H = \{0\}$, and so $a_0=\dots=a_{n-1}=0$.
\end{proof}

We now have the tools we need to prove our main invariance result for rings.

\begin{pproof}{Theorem~\ref{thm:qi_rings}}
The result is trivial if $m=1$ or $n=1$. In this case, either $GK$ or $KG$ is scalar, so every $S$ is QI with respect to every $G$. Further, $K\adj(I_R-GK) = \adj(I_R-KG)K = K$, so the right-hand side always holds as well. We assume from now on that $m,n\ge 2$.

Suppose $S$ is QI with respect to $G$, and let $K\in S$. Using Proposition~\ref{prop:ch}, write:
\begin{align*}
K\adj(I_R-GK) &= -\sum_{i=1}^m p_i K(I_R-GK)^{i-1} \\
&= \sum_{i=1}^m h_i K(GK)^{i-1}
\end{align*}
where the $h_i\in R$ are obtained by expanding each $(I_R-GK)^{i-1}$
term and collecting like powers of $GK$. If $2_R\in U(R)$, then by Lemma \ref{lem:kgk} all terms in the sum are in $S$. Since $S$ is an $R$-module, it follows that $K\adj(I_R-GK) \in S$.

Conversely, suppose that $S$ is not QI with respect to $G$. Therefore, there exists some $K_0\in S$ such that $K_0 G K_0 \notin S$. We proceed by way of contradiction. Suppose that $K\adj(I_R-GK)\in S$ for all $K\in S$. In particular, it must hold for $K = rK_0$ with $r\in R$. Therefore, we conclude that
\[
r K_0 \adj(I_R - r GK_0) \in S\qquad\text{for all }r\in R
\]
Because each entry of the adjugate matrix is the determinant of an $(m-1)\times (m-1)$ minor, it follows that $\adj(I_R-rGK_0)$ is a polynomial in $r$ of degree at most $m-1$ with coefficients in $R^{m\times m}$. Letting $\adj(I_R-rGK_0)=\sum_{i=0}^{m-1} r^i B_i$, we obtain
\begin{equation*}
(K_0 B_0)r + (K_0 B_1)r^2 + \dots + (K_0 B_{m-1})r^m \in S\qquad
\text{for all }r\in R
\end{equation*}
If every residue field of $R$ has at least $m+1$ elements, we conclude from Lemma~\ref{lem:Helem} that $K_0 B_{i}\in S$ for $i=0,\dots, m-1$. Recall the identity~\eqref{eq:fundamental_adjoint_property}, and let $A = I_R-xGK_0 \in M_n(R[x])$. We then obtain
\begin{align*}
(I_R - xGK_0)(B_0 + xB_1 + \dots + x^{m-1}B_{m-1}) 
&= \det(I_R-xGK_0)I_R \\
&= (q_0 + q_1 x + \dots + q_m x^m) I_R
\end{align*}
where the $q_i\in R$ only depend on $G$ and $K_0$. Because of the ring isomorphism $M_m(R[x]) \isomorphic (M_m(R))[x]$, the equation above is a polynomial identity in $(M_m(R))[x]$. So we may collect like powers of $x$ and set all coefficients to zero. For the first two coefficients, we have $B_0 = q_0 I$ and $B_1 - GK_0 B_0 = q_1 I_R$. Note also that $B_0=I_R$, which follows because $B_0 = \adj(I_R) = I_R$. Therefore, $B_1 = GK_0 + q_1 I_R$. Based on our earlier conclusion that $K_0B_i \in S$, we have
$
K_0 B_1 = K_0 G K_0 + q_1 K_0 \in S
$.
Since $K_0 \in S$ by assumption and $S$ is an $R$-module, it follows that $K_0 G K_0 \in S$, a contradiction. If we use the identity $K\adj(I-GK) = \adj(I-KG)K$, we now have $\adj(I-KG)\in\R^{n\times n}$. Carrying out a similar argument, we deduce that the result also holds when every residue field of $R$ has at least $n+1$ elements, thus completing the proof.
\end{pproof}

The counterexample given in Section~\ref{sec:rings_and_fields} is an example for which $2_R\notin U(R)$. In that case, Theorem~\ref{thm:qi_rings} fails because Lemma~\ref{lem:kgk} fails. One way to avoid the technical requirement that $2_R\in U(R)$ is to strengthen the notion of quadratic invariance. For example, if we require that $K_1 G K_2 \in S$ for all $K_1, K_2\in S$ then Lemma~\ref{lem:kgk} follows without the requirement that $2_R\in U(R)$. Thus, we would obtain a weaker version of the first part of Theorem~\ref{thm:qi_rings}. The proof of Theorem~\ref{thm:qi_fields} uses similar machinery to that used in the proof of Theorem~\ref{thm:qi_rings}.

\begin{pproof}{Theorem~\ref{thm:qi_fields}}
As in Theorem~\ref{thm:qi_rings}, the cases $m=1$ and $n=1$ are trivial, so we assume $m,n\ge 2$. If $\ch\F \ne 2$, then $2_\F \ne 0$, which means $2_\F$ is invertible. Furthermore, $\F$ contains at least $\min(m,n)+1$ distinct elements, so we may apply both parts of Theorem~\ref{thm:qi_rings}. Therefore, $S$ being quadratically invariant with respect to $G$ is equivalent to $K\adj(I-GK)\in S$ for all $K\in S$. If we restrict to $K\in S\cap M$, then $\det(I-GK)\ne 0$. So we deduce that
\[
h(K) = -\bigl(\det(I-GK)\bigr)^{-1}K\adj(I-GK) \in S
\]
Since $h:M\to M$, it follows that $h(S\cap M) \subseteq S\cap M$, and by the involutive property of~$h$, we deduce that $h(S\cap M) = S\cap M$.

Conversely, suppose that $h(S\cap M) = S\cap M$. Then it follows that $K\adj(I-GK)\in S$ for all $K\in S\cap M$. Suppose $S$ is not quadratically invariant with respect to $G$, so there is some $K_0\in S$ such that $K_0 G K_0 \notin S$. Proceeding as in the proof of Theorem~\ref{thm:qi_rings}, let $\adj(I_R-rGK_0)=\sum_{i=0}^{m-1} r^i B_i$ and obtain
\begin{equation*}
(K_0 B_0)r + (K_0 B_1)r^2 + \dots + (K_0 B_{m-1})r^{m} \in S
\qquad
\text{for all }r\in R\text{ such that } rK_0 \in M
\end{equation*}
In other words, the inclusion holds whenever $\det(I-rGK_0)\ne 0$. This is a polynomial of degree $m$ so it has at most $m$ roots. Therefore, the constraint that $\det(I-rGK_0)\ne 0$ excludes at most $m$ elements of $\F$. Therefore, $\F$ must contain at least $2m+1$ elements so that there are at least $m+1$ elements remaining after all roots of $\det(I-rGK_0)$ have been excluded. As in the proof of Theorem~\ref{thm:qi_rings}, we may apply a similar argument to conclude that $\F$ contains at least $2n+1$ elements, and the proof in complete.
\end{pproof}

\subsection{Invariance for rationals}

Rational functions are fundamentally a field, so our first invariance result follows immediately from our invariance result for fields.

\begin{pproof}{Theorem~\ref{thm:qi_rationals}}
Note that $\mr$ is a field, so $S$ is a subspace over $\mr$. Furthermore, $\mr$ has infinitely many elements and $2 \ne 0$, so the result follows from Theorem~\ref{thm:qi_fields}.
\end{pproof}

The rationals become a ring when we impose a properness constraint. Furthermore, the strictly proper rationals are an ideal $\R(s)_{\sp} \ideal \R(s)_{\p}$. We can also check that this ideal is maximal, because the proper rationals that are not strictly proper are precisely the units of $\R(s)_{\p}$. Indeed, $\R(s)_{\sp}$ is the unique maximal ideal of $\R_{\p}$, so $\R_{\p}$ is a \emph{local ring}.

We may use these structural facts to prove the following lemma, which gives a condition that guarantees the invertibility of $I-GK$.

\begin{lem}\label{lem:invertible}
Suppose $G \in \R(s)_{\sp}^{m\times n}$ and $K \in \R(s)^{n\times m}_{\p}$. Then
$(I-GK)$ is invertible, and $(I-GK)^{-1} \in \R(s)_{\p}^{m\times m}$.
\end{lem}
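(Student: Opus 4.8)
The plan is to use the fact, just recorded above, that $\R(s)_{\p}$ is a local ring whose unique maximal ideal is $\R(s)_{\sp}$, and to reduce the whole computation modulo that ideal. The point is that $I-GK$ is congruent to the identity modulo $\R(s)_{\sp}$, so its determinant is congruent to $1$ and is therefore a unit of $\R(s)_{\p}$; invertibility and the properness of the inverse then follow from the adjugate formula of Section~\ref{sec:prelim}.

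Concretely, I would proceed in three steps. First, I would check that $GK\in\R(s)_{\sp}^{m\times m}$: each entry of $GK$ is a finite sum of products $G_{ik}K_{kj}$ with $G_{ik}\in\R(s)_{\sp}$ and $K_{kj}\in\R(s)_{\p}$, and since $\R(s)_{\sp}\ideal\R(s)_{\p}$, every such product and hence every such sum lies in $\R(s)_{\sp}$. In particular the entries of $I-GK$ lie in $\R(s)_{\p}$. Second, I would apply the quotient homomorphism $\pi:\R(s)_{\p}\to\R(s)_{\p}/\R(s)_{\sp}\isomorphic\R$ entrywise, giving a ring homomorphism $M_m(\R(s)_{\p})\to M_m(\R)$ under which $I-GK\mapsto I$ because $GK$ is killed. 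Since $\det$ commutes with ring homomorphisms, $\det(I-GK)\mapsto\det(I)=1\ne 0$, so $\det(I-GK)\notin\R(s)_{\sp}$; being outside the unique maximal ideal of the local ring $\R(s)_{\p}$, it is a unit, i.e. $\det(I-GK)\in U(\R(s)_{\p})$. Third, by the invertibility criterion in Section~\ref{sec:prelim}, $I-GK$ is invertible over $\R(s)_{\p}$ and
\[
(I-GK)^{-1} = \bigl(\det(I-GK)\bigr)^{-1}\adj(I-GK);
\]
the entries of $\adj(I-GK)$ are signed minors of $I-GK$, hence polynomial expressions in entries lying in $\R(s)_{\p}$, and $\bigl(\det(I-GK)\bigr)^{-1}\in\R(s)_{\p}$, so $(I-GK)^{-1}\in\R(s)_{\p}^{m\times m}$.

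There is no substantial obstacle here; the only step requiring care is the very first one, namely that the product of a strictly proper and a proper rational function is strictly proper — and this is precisely the ideal property $\R(s)_{\sp}\ideal\R(s)_{\p}$ already established before the lemma, so it may be invoked directly.
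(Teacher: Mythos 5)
Your proof is correct and follows essentially the same route as the paper's: reduce modulo the maximal ideal $\R(s)_{\sp}$ to conclude $\det(I-GK)\equiv\det(I)\equiv 1$, hence $\det(I-GK)\notin\R(s)_{\sp}$ and is therefore a unit of the local ring $\R(s)_{\p}$, so $I-GK$ is invertible over $\R(s)_{\p}$. You merely spell out intermediate steps the paper leaves implicit (that $GK$ is entrywise strictly proper, and that the adjugate formula yields an inverse with entries in $\R(s)_{\p}$).
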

\begin{proof}
Since $\R(s)_{\sp} \ideal \R(s)_{\p}$ is maximal and $G \in \R(s)_{\sp}^{m\times n}$, we have
\[
\det(I-GK) \equiv \det(I) \equiv 1 \pmod {\R(s)_{\sp}}
\]
It follows that $\det(I-GK) \notin \R(s)_{\sp}$, and is therefore a unit. So $I-GK$ is invertible.
\end{proof}

The motivation for choosing a strictly proper $G$ and a proper $K$ is inspired by classical feedback control. If we think of the proper rationals as transfer functions, a strictly proper $K$ means that the controller has no direct feedthrough term, a common assumption that ensures well-posedness of the closed-loop interconnection. We now present the proof of the invariance result for proper rationals.

\begin{pproof}{Theorem~\ref{thm:qi_rationals_proper}}
Note that $\R(s)_{\p}$ is a ring and $2 \in \R(s)_{\p}$. Because $\R(s)_{\p}$ has a unique maximal ideal $\R(s)_{\sp}$, the only residue field is $\R(s)_{\p} / \R(s)_{\sp} = \R$, which has infinitely many elements. Applying Theorem~\ref{thm:qi_rings}, we conclude that $S$ is QI with respect to $G$ if and only if $K\adj(I-GK)\in S$ for all $K\in S$. However, $I-GK$ is always invertible by Lemma~\ref{lem:invertible}, so $K\adj(I-GK)\in S$ if and only if $h(K)\in S$. By the involutive property of $h$, this is equivalent to $h(S)=S$.
\end{pproof}

\begin{pproof}{Theorem~\ref{thm:qi_mixed_rationals}}
First, note that $\my_{\p} \isomorphic \mx_{\p}$, so we may think of the multivariate transfer functions as proper transfer functions in $s_1,\dots,s_k$ with coefficients that are rational functions of $x_1,\dots,x_\ell$. As in Theorem~\ref{thm:qi_rationals_proper}, we still have $2\in \mx_{\p}$, but there is no longer a unique maximal ideal. Indeed, the maximal ideals are the sets:
\begin{align*}
M_i &= \set{f \in \mx_{\p} }{ f \text{ is strictly proper in }s_i } \\
&= \Bigl(\bigl( \R(\mathbf{x}) \bigr) (\mathbf{s}\setminus s_i)_{\p}\Bigr)(s_i)_{\sp}
\end{align*}
and it is easy to check that the corresponding residue fields $\mx_\p / M_i$ each have infinitely many elements. Furthermore, $\det(I-GK) \equiv \det(I) \equiv 1 \pmod{M_i}$ for each $i$, as in the proof of Lemma~\ref{lem:invertible}. So $\det(I-GK)$ does not belong to any maximum ideal and must therefore be a unit. The rest follows as in the proof of Theorem~\ref{thm:qi_rationals_proper}.
\end{pproof}

\section{Summary}
\label{sec:conclusion}

In this paper, we give an algebraic treatment of quadratic invariance, the well-known condition under which decentralized control synthesis may be reduced to a convex optimization problem. Our results hold for commutative rings with identity, and in particular specialize to the natural system-theoretic case of proper rational functions in one variable, as well as multidimensional rational functions. This formulation has the advantage of avoiding some of the technicalities in analytic treatments. In particular, notions of topology, limits, or norms are not required. Thus, quadratic invariance may be viewed as a purely algebraic concept.

\appendices
\section*{Acknowledgments}
The proof of Lemma~\ref{lem:inv_vandermonde} is due to Thomas Goodwillie, and the noninvertible Vandermonde example on Page~\pageref{exref} is due to David Speyer.

\bibliographystyle{abbrv}
\bibliography{alqi}

\begin{thebibliography}{10}

\bibitem{bamieh}
B.~Bamieh, F.~Paganini, and M.~Dahleh.
\newblock Distributed control of spatially invariant systems.
\newblock {\em IEEE Transactions on Automatic Control}, 47(7):1091--1107, 2002.

\bibitem{beck_2001}
C.~Beck.
\newblock On formal power series representations for uncertain systems.
\newblock {\em IEEE Transactions on Automatic Control}, 46(2):314--319, 2001.

\bibitem{beckdoyleglover}
C.~L. Beck, J.~Doyle, and K.~Glover.
\newblock {Model-reduction of multidimensional and uncertain systems}.
\newblock {\em IEEE Transactions on Automatic Control}, 41(10):1466--1477,
  1996.

\bibitem{blondel}
V.~D. Blondel and J.~N. Tsitsiklis.
\newblock {A survey of computational complexity results in systems and
  control}.
\newblock {\em Automatica}, 36(9):1249--1274, 2000.

\bibitem{curtis}
M.~L. Curtis.
\newblock {\em Abstract linear algebra}.
\newblock Springer, 1990.

\bibitem{dulleruddandrea}
R.~{D'Andrea} and G.~E. Dullerud.
\newblock {Distributed control design for spatially interconnected systems}.
\newblock {\em IEEE Transactions on Automatic Control}, 48(9):1478--1495, 2003.

\bibitem{fliess_1983}
M.~Fliess, M.~Lamnabhi, and F.~Lamnabhi-Lagarrigue.
\newblock An algebraic approach to nonlinear functional expansions.
\newblock {\em IEEE Transactions on Circuits and Systems}, 30(8):554--570,
  1983.

\bibitem{fornasini}
E.~Fornasini and G.~Marchesini.
\newblock {Doubly-indexed dynamical systems: State-space models and structural
  properties}.
\newblock {\em Theory of Computing Systems}, 12(1):59--72, 1978.

\bibitem{hochu}
Y.-C. Ho and K.-C. Chu.
\newblock {Team decision theory and information structures in optimal control
  problems---Part I}.
\newblock {\em IEEE Transactions on Automatic Control}, 17(1):15--22, 1972.

\bibitem{kalman_1965}
R.~E. Kalman.
\newblock Algebraic structure of linear dynamical systems, {I}. the module of
  {$\Sigma$}.
\newblock {\em Proceedings of the National Academy of Sciences of the United
  States of America}, 54(6):1503--1508, 1965.

\bibitem{klarner_1969}
D.~Klarner.
\newblock Algebraic theory for difference and differential equations.
\newblock {\em The American Mathematical Monthly}, 76(4):366--373, 1969.

\bibitem{lamperski_lessard}
A.~Lamperski and L.~Lessard.
\newblock Optimal state-feedback control under sparsity and delay constraints.
\newblock In {\em 3rd IFAC Workshop on Distributed Estimation and Control in
  Networked Systems}, pages 204--209, 2012.

\bibitem{lang_undergraduate}
S.~Lang.
\newblock {\em Undergraduate algebra}.
\newblock Springer, 2005.

\bibitem{langbort2004}
C.~Langbort, R.~S. Chandra, and R.~D'Andrea.
\newblock Distributed control design for systems interconnected over an
  arbitrary graph.
\newblock {\em IEEE Transactions on Automatic Control}, 49(9):1502--1519, 2004.

\bibitem{lessard_thesis}
L.~Lessard.
\newblock {\em Tractability of complex control systems}.
\newblock PhD thesis, Stanford University, 2011.

\bibitem{lessard2010algebraic}
L.~Lessard and S.~Lall.
\newblock An algebraic framework for quadratic invariance.
\newblock In {\em IEEE Conference on Decision and Control}, pages 2698--2703,
  2010.

\bibitem{lessard_tpof}
L.~Lessard and S.~Lall.
\newblock A state-space solution to the two-player decentralized optimal
  control problem.
\newblock In {\em Allerton Conference on Communication, Control, and
  Computing}, pages 1559--1564, 2011.

\bibitem{voulgaris_stabilization}
X.~Qi, M.~V. Salapaka, P.~G. Voulgaris, and M.~Khammash.
\newblock Structured optimal and robust control with multiple criteria: a
  convex solution.
\newblock {\em IEEE Transactions on Automatic Control}, 49(10):1623--1640,
  2004.

\bibitem{riesz_1990}
F.~Riesz and B.~Sz.-Nagy.
\newblock {\em Functional analysis}.
\newblock Dover Publications, 1990.

\bibitem{robinson05}
D.~W. Robinson.
\newblock The classical adjoint.
\newblock {\em Linear Algebra and its Applications}, 411:254--276, 2005.

\bibitem{roesser}
R.~Roesser.
\newblock {A discrete state-space model for linear image processing}.
\newblock {\em IEEE Transactions on Automatic Control}, 20(1):1--10, 1975.

\bibitem{rotkowitz10}
M.~Rotkowitz, R.~Cogill, and S.~Lall.
\newblock {Convexity of optimal control over networks with delays and arbitrary
  topology}.
\newblock {\em International Journal of Systems, Control and Communication},
  2(1):30--54, 2010.

\bibitem{rotkowitz02}
M.~Rotkowitz and S.~Lall.
\newblock Decentralized control information structures preserved under
  feedback.
\newblock In {\em IEEE Conference on Decision and Control}, pages 569--575,
  2002.

\bibitem{rotkowitz06}
M.~Rotkowitz and S.~Lall.
\newblock A characterization of convex problems in decentralized control.
\newblock {\em IEEE Transactions on Automatic Control}, 51(2):274--286, 2006.

\bibitem{sabau_2011}
{\cb{S}}.~Sab\u{a}u and N.~Martins.
\newblock Necessary and sufficient conditions for stabilizability subject to
  quadratic invariance.
\newblock In {\em IEEE Conference on Decision and Control}, pages 2459--2466,
  2011.

\bibitem{shah13}
P.~Shah and P.~A. Parrilo.
\newblock $\mathcal{H}_2$-optimal decentralized control over posets: A
  state-space solution for state-feedback.
\newblock {\em IEEE Transactions on Automatic Control}, 58(12):3084--3096,
  2013.

\bibitem{shin2012decentralized}
H.~Shin and S.~Lall.
\newblock Decentralized control via {G}r\"{o}bner bases and variable
  elimination.
\newblock {\em IEEE Transactions on Automatic Control}, 57(4):1030--1035, 2012.

\bibitem{smith_1989}
M.~C. Smith.
\newblock On stabilization and the existence of coprime factorizations.
\newblock {\em IEEE Transactions on Automatic Control}, 34(9):1005--1007, 1989.

\bibitem{vidyasagar}
M.~Vidyasagar.
\newblock {\em Control system synthesis: a factorization approach}.
\newblock MIT Press, 1985.

\bibitem{wiener}
N.~Wiener.
\newblock {\em Extrapolation, interpolation, and smoothing of stationary time
  series: with engineering applications}.
\newblock MIT press, 1964.

\bibitem{witsenhausen}
H.~S. Witsenhausen.
\newblock {A counterexample in stochastic optimum control}.
\newblock {\em SIAM Journal on Control}, 6(1):131--147, 1968.

\bibitem{youla}
D.~Youla, H.~Jabr, and J.~{Bongiorno Jr.}
\newblock Modern {Wiener-Hopf} design of optimal controllers--{Part II}: The
  multivariable case.
\newblock {\em IEEE Transactions on Automatic Control}, 21(3):319--338, 1976.

\end{thebibliography}


\end{document}